\newtheorem{theorem}{Theorem}
\newtheorem{lemma}[theorem]{Lemma}
\newtheorem{proposition}[theorem]{Proposition}
\newtheorem{example}[theorem]{Example}
\newtheorem{remark}[theorem]{Remark}
\newtheorem{definition}[theorem]{Definition}
\begin{document}

\title{Optimal and Almost Optimal Locally Repairable Codes from Hyperelliptic Curves}
\author{Junjie Huang\thanks{J. Huang is with the School of Mathematics, Sun Yat-sen University, Guangzhou 510275, China (e-mail: huangjj76@mail2.sysu.edu.cn).},
Chang-An Zhao\thanks{C.-A. Zhao is with the School of Mathematics, Sun Yat-sen University, Guangzhou 510275, China, and also with the Guangdong Key Laboratory of Information Security Technology, Guangzhou 510006, China (e-mail: zhaochan3@mail.sysu.edu.cn).}}

\maketitle

\begin{abstract}
Locally repairable codes are widely applicable in contemporary large-scale distributed cloud storage systems and various other areas. By making use of some algebraic structures of elliptic curves, Li et al. developed a series of $q$-ary optimal locally repairable codes with lengths that can extend to $q+2\sqrt{q}$. In this paper, we generalize their methods to hyperelliptic curves of genus $2$, resulting in the construction of several new families of $q$-ary optimal or almost optimal locally repairable codes. Our codes feature lengths that can approach $q+4\sqrt{q}$, and the locality can reach up to $239$.

{\bf Index terms:} Locally repairable codes, Hyperelliptic curves, Automorphism groups, Function fields, Algebraic Geometry codes.
\end{abstract}

\section{Introduction}

The introduction of locally repairable codes (LRCs) by Gopalan et al. \cite{On_the_Locality_of_Codeword_Symbols} marked a pivotal advancement in the domain of error-correcting codes, especially concerning their application to contemporary large-scale distributed cloud storage systems. In such environments, employing locally repairable codes facilitates the efficient restoration of data following node failures, a capability that is essential for preserving the reliability and availability of stored information. By constraining the number of nodes required to recover a single failed node, locally repairable codes effectively reduce both the bandwidth and computational demands involved in the data recovery process. Let $\mathbb{F}_q$ denote the finite field with $q$ elements. A $q$-ary $[n,k,d]_q$ code $\mathcal{C}$ is a linear subspace of $\mathbb{F}_q^n$ with dimension $k$ and minimum distance $d$. Codes that incorporate locality constraints are specifically constructed to endure the erasure of symbols within a codeword through the embedding of minimal redundancy in each codeword. The concept of locally repairable codes with locality $r$ is formally defined as follows\cite[Def. 4.1]{Locally_Recoverable_Codes_from_Algebraic_Curves_and_Surfaces}.

\begin{definition}
Let $\mathcal{C}\subseteq \mathbb{F}_q^n$ be a q-ary code. Given $a\in \mathbb{F}_q$, consider the sets of codewords
\[
    \mathcal{C}(i,a)=\{x\in \mathcal{C}:x_i=a\},\quad i\in\{1,\,2,\,\cdots,\,n\}.
\]
The code $\mathcal{C}$ is a locally repairable code with locality r if for every $i\in \{1,\,2,\,\cdots,\,n\}$ there exists an r-element subsets $I_i\subseteq \{1,\,2,\,\cdots,\,n\}\backslash \{i\}$ such that the restrictions of the sets $\mathcal{C}(i,a)$ to the coordinates in $I_i$ for different a are disjoint, i.e.
\[
    \mathcal{C}_{I_i}(i,a)\cap\mathcal{C}_{I_i}(i,a^\prime)=\emptyset,\quad whenever \quad a\neq a^\prime.
\]
\end{definition}

Large minimum distance $d$ and high rate $k/n$ are two desirable parameters for codes. For a $q$-ary $[n,k,d]_q$ locally repairable code $\mathcal{C}$ with locality $r$, it is proved in \cite{A_Family_of_Optimal_Locally_Recoverable_Codes} that the minimum distance $d$ of $\mathcal{C}$ satisfies the inequality
\begin{equation}\label{min_d_bound}
    d\le n-k-\bigg\lceil\frac{k}{r}\bigg\rceil+2.
\end{equation}
The bound (\ref{min_d_bound}) of the minimum distance of $\mathcal{C}$ is called the {\it Singleton-type upper bound}. The difference between the two terms in (\ref{min_d_bound}), $\Delta=n-k-\big\lceil\frac{k}{r}\big\rceil+2-d$, is the {\it Singleton-optimal defect} of $\mathcal{C}$. A code $\mathcal{C}$ with $\Delta=0\,(\Delta=1)$ is called an (almost) optimal locally repairable code. 

In \cite{A_Family_of_Optimal_Locally_Recoverable_Codes}, Tamo and Barg introduced a variation of Reed–Solomon codes to achieve local recoverability. These so-called locally repairable RS codes are optimal and have significantly lower locality than the RS codes themselves. However, their length is still smaller than the size of $\mathbb{F}_q$. The problem of finding optimal locally repairable codes when $n>q$ has always been a research hotspot. A classical approach to obtain longer codes is to use algebraic curves with many rational points. In this way, Barg, Tamo, and Vladut \cite{Locally_Recoverable_Codes_on_Algebraic_Curves} extended locally repairable RS codes to so-called locally repairable algebraic geometry codes, which in effect resulted in more locally repairable codes. The natural progression from these successes was the exploration of locally repairable codes using algebraic geometry methods, as indicated in \cite{Locally_Recoverable_Codes_from_Algebraic_Curves_and_Surfaces}, \cite{Optimal_Locally_Repairable_Codes_Via_Elliptic_Curves}, \cite{Locally_Recoverable_J-affine_variety_codes}, \cite{Locally_Recoverable_codes_with_availability_<i>t</i>≥2_from_fiber_products_of_curves}, \cite{Locally_Recoverable_codes_from_rational_maps}, \cite{Locally_Recoverable_codes_from_algebraic_curves_with_separated_variables},\cite{Construction_of_Optimal_Locally_Repairable_Codes_via_Automorphism_Groups_of_Rational_Function_Fields}. In particular, Li et al. provided a systematic method for producing optimal locally repairable codes via elliptic curves \cite{Optimal_Locally_Repairable_Codes_Via_Elliptic_Curves} and Jin et al. constructed optimal locally repairable codes by employing automorphism groups of rational function fields \cite{Construction_of_Optimal_Locally_Repairable_Codes_via_Automorphism_Groups_of_Rational_Function_Fields}. Apart from these, the algebraic curves most commonly considered for constructing locally repairable codes are generalized Giulietti and Korchmaros (GK) curves, Suzuki curves, Hermitian curves, Garcia-Stichtenoth curves and others \cite{Locally_Recoverable_J-affine_variety_codes,Locally_Recoverable_codes_with_availability_<i>t</i>≥2_from_fiber_products_of_curves,Locally_Recoverable_codes_from_rational_maps,Locally_Recoverable_codes_from_algebraic_curves_with_separated_variables}. In addition, there are several constructions of a locally repairable code with $\Delta=1$. In \cite{Constructions_of_optimal_and_almost_optimal_locally_repairable_codes} Ernvall et al. gave a construction for almost optimal locally repairable codes with parameters $(n,k,r)$ satisfying that $q>2\binom{n}{k-1}$ and $n\not\equiv 1\bmod{r+1}$. In \cite{Local_codes_with_addition_based_repair}, Han Mao Kiah et al. constructed an almost optimal locally repairable code with addition based repair and information locality $r$ where the parameters satisfy $n<q$ and $r\mid k$.

In this paper, we introduce a framework for constructing locally repairable codes derived from hyperelliptic curves with genus $2$, which, to a certain extent, generalize the methods discussed in \cite{Optimal_Locally_Repairable_Codes_Via_Elliptic_Curves}. The main idea is to consider a subgroup $\mathcal{G}$ of the automorphism group of the hyperelliptic curve $\mathfrak{C}$ such that the fixed field $F^{\mathcal{G}}$ of the function field $F=\mathbb{F}_q(\mathfrak{C})$, with respect to $\mathcal{G}$, is a rational function field. Given several suitable conditions, it is feasible to identify $|\mathcal{G}|-1$ elements within a specific Riemann-Roch space that ensure local recoverability. Subsequently, we can design the size of the dimension $k$ such that the locally repairable code that we constructed is either optimal or almost optimal. The length of our constructed code can exceed $q$ and approach $q+4\sqrt{q}$, while the locality $r$ can be as big as $239$. In Table \ref{opt_parameter}, we present a comparison of the parameters of our codes with those of earlier constructions, which were constructed via algebraic curves and have specific Singleton-optimal defect. 
    \begin{table}[ht]
	\centering
	\begin{tabular}{ccccc}
        \hline
			Ref. & Length $n$ & Dimension $k$ & Locality $r$ & Singleton-optimal defect $\Delta$ \\
        \hline
         \cite{Optimal_Locally_Repairable_Codes_Via_Elliptic_Curves} & $\ell(r+1)$ & $rt-(r-1)$ & $3,5,7,11,23$ & $0$ \\
         
          \cite{Optimal_Locally_Repairable_Codes_Via_Elliptic_Curves} & $3\ell$ & $2t+1$ & $2$ & $0$ \\

         \cite[Cor. 3(a)]{Locally_Recoverable_codes_from_algebraic_curves_with_separated_variables} & $\le u\cdot v(\phi_1)$ & \makecell{$k^\ast$ \\ the dimension of $\mathcal{C}(\mathcal{P},\mathcal{L}(mQ))$} & $\le \min\{r_1,\cdots,r_u\}$ & $\le g+1-\lceil(m+1-g)/(b-1)\rceil$ \\

         \cite[Cor. 3(b)]{Locally_Recoverable_codes_from_algebraic_curves_with_separated_variables} & $\le u\cdot v(\phi_1)$ & \makecell{$k^\ast-1-\ell_{b-1}$ \\ $\ell_{b-1}=\lfloor(m-a(b-1))/b\rfloor$} & $\le \min\{r_1,\cdots,r_u\}$ & \makecell{$\le g+2+\ell_{b-1}$ \\ $-\lceil(m-g-\ell_{b-1})/(b-1)\rceil$} \\

          \cite[Sec. IV]{Construction_of_Optimal_Locally_Repairable_Codes_via_Automorphism_Groups_of_Rational_Function_Fields} & $m(r+1)$ & $rt$ & \makecell{$p^v-1,up^v-1$ \\ or $(r+1)\mid (q-1)$} & 0 \\

           \cite[Thm. V.3]{Construction_of_Optimal_Locally_Repairable_Codes_via_Automorphism_Groups_of_Rational_Function_Fields} & $m(r+1)$ & $rt$ & $(r+1)\mid(q+1)$ & 0 \\
          
	   Thm. \ref{r_odd} & $\ell(r+1)$ & $rt-(r-1)$ & \makecell{$|\mathcal{G}|-1$ \\ $|\mathcal{G}|$ is even} & $\le1$ \\
    
        Thm. \ref{r_even},\,Rem. \ref{r_even_re}(2) & $5\ell$ & $4t+1$ or $4t+2$ & $4$ & $\le1$\\
        
        Rem. \ref{r_even_re}(1) & $3\ell$ & $2t+1$ & $2$ & $\le1$\\
        \hline
	\end{tabular}
	\caption{The parameters of some locally repairable codes}
	\label{opt_parameter}
    \end{table}


The paper is organized as follows. Section \ref{pre} provides some preliminaries about hyperelliptic curves, theory of function fields and so on. In the subsequent section, we will divide this into three parts, and begin with a general framework for constructing locally repairable codes via automorphism groups of hyperelliptic curves with genus 2. Following that, we will discuss two specific constructions, one for the case where locality $r$ is odd (Theorem \ref{r_odd}) and the other for even (Theorem \ref{r_even}). Each case is illustrated with several examples. In Section \ref{conclusion}, we give a summary of the entire article.

\section{Preliminaries}\label{pre}
In this section, we provide a concise review of some preliminaries related to hyperelliptic curves of genus 2 over finite fields, algebraic geometry codes, theory of function fields and automorphism groups of hyperelliptic curves with genus 2.

\subsection{Hyperelliptic Curves of Genus 2 Over Finite Fields}

Let $q$ be a power of an odd prime $p$. Let $\mathfrak{C}$ be a projective, non-singular, geometrically irreducible algebraic curve with genus 2 defined over $\mathbb{F}_q$. In general, we will implicitly assume that curves of genus 2 over $\mathbb{F}_q$ are given by a hyperelliptic model
\begin{equation}\label{hyper_eqa}
    \mathfrak{C}:\quad y^2=f(x)
\end{equation}
with $f(x)$ of degree 5 or 6 without multiple roots. Such a curve has a unique singularity at the point at infinity that corresponding to one or two points in a hyperelliptic model, depending on whether the degree of polynomial $f(x)$ is odd or even. See \cite{Prolegomena_to_a_Middlebrow_Arithmetic_of_Curves_of_Genus_2,Construction_de_courbes_de_genre_2_`a_partir_de_leurs_modules,Computational_aspects_of_curves_of_genus_at_least_2} for more details about hyperelliptic curves and curves of genus 2. We write $\mathfrak{C}/\mathbb{F}_q$ as a hyperelliptic curve $\mathfrak{C}$ defined over $\mathbb{F}_q$. 

Denote by $F=\mathbb{F}_q(\mathfrak{C})$ and $\mathfrak{C}(\mathbb{F}_q)$ the function field of $\mathfrak{C}/\mathbb{F}_q$ and the set of $\mathbb{F}_q$-rational points, respectively. Then the function field $F$ is given by $F=\mathbb{F}_q(x,y)$, where $x$ and $y$ satisfy the equation (\ref{hyper_eqa}). Let $\mathbb{P}_F$ be the set of all places of $F$ and $\mathbb{P}_F^1=\{P\in\mathbb{P}_F:\deg P =1\}$ be the set of rational places of $F$. There is a one-to-one correspondence between $\mathfrak{C}(\mathbb{F}_q)$ and $\mathbb{P}_F^1$. More specifically, the rational point $(a,b)$ on $\mathfrak{C}$ corresponds to the unique common zero of $x-a$ and $y-b$, denoted by $P_{a,b}$. The point $(a,-b)$ also lies on $\mathfrak{C}$ which corresponds to the unique common zero of $x-a$ and $y+b$, and we will denote it by $\bar{P}_{a,b}$. If $\mathfrak{C}$ has a unique point at infinity $\infty$, then we denote the place corresponding to it by $P_{\infty}$; if $\mathfrak{C}$ has two points at infinity $\infty^+$ and $\infty^-$, then we denote the places corresponding to them by $P_{\infty^+}$ and $P_{\infty^-}$ when $2\mid [\mathbb{F}_q:\mathbb{F}_p]$.

The following lemma gives a bound on the size of $\mathbb{P}_F^1$ which is called the Hasse-Weil Bound \cite[Thm. 5.2.3]{Algebraic_Function_Fields_and_Codes}.

\begin{lemma}
    The number $\#\mathbb{P}_F^1$ of rational places of $F/\mathbb{F}_q$ with genus $2$ satisfies the inequality
    \begin{equation}\label{Hasse-Weil_Bound}
        |\#\mathbb{P}_F^1-(q+1)|\le 4q^{1/2}.
    \end{equation}
\end{lemma}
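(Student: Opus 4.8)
The plan is to deduce the inequality from the rationality of the zeta function of $F/\mathbb{F}_q$ together with the Hasse--Weil ``Riemann Hypothesis'' for function fields, following the proof of \cite[Thm. 5.2.3]{Algebraic_Function_Fields_and_Codes}. First I would recall that the zeta function $Z_F(t)=\sum_{n\ge 0}A_n t^n$, with $A_n$ the number of effective divisors of $F/\mathbb{F}_q$ of degree $n$, is rational of the form
\begin{equation*}
    Z_F(t)=\frac{L(t)}{(1-t)(1-qt)},
\end{equation*}
where $L(t)\in\mathbb{Z}[t]$ has degree $2g=4$ and satisfies $L(0)=1$. Factoring $L(t)=\prod_{i=1}^{4}(1-\alpha_i t)$ over $\mathbb{C}$ produces the Frobenius ``eigenvalues'' $\alpha_1,\dots,\alpha_4$.

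Next I would recover $\#\mathbb{P}_F^1$ from $Z_F(t)$. Writing $N_r$ for the number of rational places of the constant field extension $F\cdot\mathbb{F}_{q^r}$ (so that $N_1=\#\mathbb{P}_F^1$ since $F\cdot\mathbb{F}_q=F$), and comparing the coefficients of the logarithmic derivative of $Z_F(t)$ with those of $\sum_{r\ge 1}N_r t^r$, one obtains $N_r=q^r+1-\sum_{i=1}^{4}\alpha_i^r$ for every $r\ge 1$. In particular,
\begin{equation*}
    \bigl|\#\mathbb{P}_F^1-(q+1)\bigr|=\Bigl|\sum_{i=1}^{4}\alpha_i\Bigr|\le\sum_{i=1}^{4}|\alpha_i|.
\end{equation*}

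The main obstacle is the genuinely deep input $|\alpha_i|=q^{1/2}$ for every $i$, i.e. the Riemann Hypothesis for the function field $F/\mathbb{F}_q$; I would invoke it as a known theorem (Weil's theorem, or the elementary Stepanov--Bombieri method), since reproving it lies outside the scope of these preliminaries. The functional equation $L(t)=q^{g}t^{2g}L\!\left(\tfrac{1}{qt}\right)$ only shows that the $\alpha_i$ pair up with $\alpha_i\alpha_{i'}=q$, which is consistent with but strictly weaker than the modulus estimate. Granting $|\alpha_i|=q^{1/2}$, the last displayed inequality becomes $\bigl|\#\mathbb{P}_F^1-(q+1)\bigr|\le 4q^{1/2}$, which is exactly \eqref{Hasse-Weil_Bound}; the same argument yields the bound $2g\,q^{1/2}$ for a function field of arbitrary genus $g$.
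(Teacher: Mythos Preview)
Your sketch is correct and is exactly the standard argument behind \cite[Thm.~5.2.3]{Algebraic_Function_Fields_and_Codes}, which the paper simply cites without giving its own proof. So there is nothing to compare: the paper treats this lemma as a known result, and your outline reproduces the cited proof.
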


If the number $\#\mathbb{P}_F^1$ attains the upper bound $q+1+4q^{1/2}$, then $\mathfrak{C}$ is called a maximal hyperelliptic curve over $\mathbb{F}_q$. Next, we introduce two types of maximal hyperelliptic curves with genus 2\cite[Thm. 1 and Thm. 6]{A_note_on_certain_maximal_hyperelliptic_curves}.

\begin{lemma}\label{max_hy}
    \begin{itemize}
        \item[(i)] The hyperelliptic curve $\mathfrak{C}$ of genus $2$ corresponding to 
        \[
            y^2=x^5+x
        \]
        is maximal over $\mathbb{F}_{q^2}$ if and only if $q\equiv 5 $ or $7 \pmod{8}$.
        \item[(ii)] The hyperelliptic curve $\mathfrak{C}$ of genus $2$ corresponding to 
        \[
            y^2=x^5+1
        \]
        is maximal over $\mathbb{F}_{q^2}$ if and only if $5$ divides $q+1$.
    \end{itemize}
\end{lemma}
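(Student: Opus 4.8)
The plan is to translate maximality into an evaluation of a character sum and then compute that sum explicitly via Gauss and Jacobi sums. Since $f(x)$ has degree $5$, the curve $\mathfrak{C}$ has a single rational place at infinity over any extension field, so writing $\chi$ for the quadratic character of $\mathbb{F}_{q^2}^{*}$ (extended by $\chi(0)=0$) one has $\#\mathbb{P}_F^1 = q^2 + 1 + S$ with $S = \sum_{x\in\mathbb{F}_{q^2}}\chi(f(x))$. Applying the Hasse--Weil bound (\ref{Hasse-Weil_Bound}) over $\mathbb{F}_{q^2}$ gives $|S|\le 4q$, and $\mathfrak{C}$ is maximal over $\mathbb{F}_{q^2}$ if and only if $S=4q$; equivalently, if and only if the $L$-polynomial of $\mathfrak{C}/\mathbb{F}_{q^2}$ equals $(1+qt)^{4}$, i.e. all four Frobenius eigenvalues equal $-q$. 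Thus the problem reduces to deciding when $S$ attains the extremal value $4q$.

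For part (i), with $f(x)=x^{5}+x=x(x^{4}+1)$, I would first record the extra structure: the roots of $x^{4}+1$ are the primitive $8$-th roots of unity, and $\mathfrak{C}$ carries the automorphisms $(x,y)\mapsto(1/x,\,y/x^{3})$ and $(x,y)\mapsto(-x,\,\sqrt{-1}\,y)$, so $\mathrm{Jac}(\mathfrak{C})$ has complex multiplication by $\mathbb{Q}(\zeta_{8})$. Using $\chi(x(x^4+1))=\chi(x)\chi(x^4+1)$ for $x\ne 0$ and expanding the factor $x^4$ through the multiplicative characters of order dividing $4$, $S$ becomes a short $\mathbb{Z}$-linear combination of Jacobi sums $J(\rho,\chi)$ with $\rho^{4}=1$, equivalently of products of Gauss sums for characters of order dividing $8$. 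These Gauss sums lie in the \emph{semiprimitive} case — hence take the rational values $\pm q$ — precisely when $-1$ is a power of $p$ modulo $8$, which for $q$ a power of $p$ amounts exactly to $q\equiv 5$ or $7\pmod 8$; a sign count then yields $S=4q$ in those cases, while for $q\equiv 1,3\pmod 8$ the sums are genuinely non-real of absolute value $q$ and force $|S|<4q$. An alternative, cleaner sufficiency argument is to exhibit $\mathfrak{C}$ over $\mathbb{F}_{q^2}$ as a quotient of a maximal Hermitian (or Fermat) curve when $q\equiv 5,7\pmod 8$ and invoke the Kleiman--Serre covering theorem that an $\mathbb{F}_{q^2}$-rational quotient of a maximal curve is again maximal; the converse still requires the character-sum computation.

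For part (ii), with $f(x)=x^{5}+1$, I would use the order-$5$ automorphism $(x,y)\mapsto(\zeta_{5}x,y)$, so that $\mathrm{Jac}(\mathfrak{C})$ has CM by $\mathbb{Q}(\zeta_{5})$. If $5\nmid q^{2}-1$, i.e. $q\equiv\pm2\pmod 5$, then $x\mapsto x^{5}$ permutes $\mathbb{F}_{q^2}$ and $S=\sum_{x}\chi(x+1)=0\ne 4q$, so $\mathfrak{C}$ is not maximal. If $5\mid q^{2}-1$, substituting $u=x^{5}$ and using $\#\{x\ne 0:x^{5}=u\}=\sum_{\psi^{5}=1}\psi(u)$ collapses $S$ to $\sum_{\psi^{5}=1,\,\psi\ne 1}\psi(-1)\,J(\psi,\chi)$, a sum of four Jacobi sums of absolute value $q$ with $\psi$ of order $5$ and $\psi\chi$ of order $10$. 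The key point is that these Gauss/Jacobi sums fall in the semiprimitive case — so that each is $\pm q$ and the four can sum to $4q$ — exactly when $5\mid q+1$; when $q\equiv 1\pmod 5$ the character of order $5$ is already defined over $\mathbb{F}_{q}$, and by the Hasse--Davenport relation the Frobenius eigenvalues of $\mathfrak{C}/\mathbb{F}_{q^2}$ are the squares of those of $\mathfrak{C}/\mathbb{F}_{q}$, which cannot all equal $-q$. Combining the cases gives maximality over $\mathbb{F}_{q^2}$ if and only if $5\mid q+1$; again sufficiency may alternatively be read off from a quotient-of-Hermitian-curve construction together with Kleiman--Serre.

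The main obstacle in both parts is the explicit evaluation of the Gauss and Jacobi sums of orders $8$ and $5,10$ together with the sign bookkeeping: one must show not merely that the relevant sums have absolute value $q$ (which is automatic from Weil/Hasse--Weil) but that in the claimed congruence classes they are all real and of the correct sign after the twist by $\psi(-1)$, so that $S$ hits $4q$, and that outside those classes they are strictly non-extremal. This is exactly where Stickelberger's theorem and the theory of semiprimitive (uniform cyclotomic) Gauss sums do the work; everything else — the point-count reduction, the automorphisms exhibiting the CM, and the permutation/bijection cases — is routine.
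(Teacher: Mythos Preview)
The paper does not prove this lemma at all; it is quoted verbatim from Tafazolian, \emph{A note on certain maximal hyperelliptic curves} (Theorems~1 and~6 there). Your outline is therefore not being compared against a proof in the paper but against a citation. That said, the strategy you sketch --- reduce maximality over $\mathbb{F}_{q^2}$ to the equality $S=4q$ for $S=\sum_{x}\chi(f(x))$, then evaluate $S$ through Gauss and Jacobi sums, with Kleiman--Serre as an alternative for sufficiency --- is a standard and legitimate route, and it is in the same spirit as the arguments in the cited reference.

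There are, however, two concrete gaps that would prevent the sketch from going through as written.

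\textbf{Part (i).} The sentence ``these Gauss sums lie in the semiprimitive case \ldots\ precisely when $-1$ is a power of $p$ modulo $8$, which for $q$ a power of $p$ amounts exactly to $q\equiv 5$ or $7\pmod 8$'' is incorrect. In $(\mathbb{Z}/8\mathbb{Z})^{*}\cong C_2\times C_2$ every element squares to $1$, so $-1\equiv 7$ lies in $\langle p\rangle$ if and only if $p\equiv 7\pmod 8$; the case $p\equiv 5\pmod 8$ is \emph{not} semiprimitive for order~$8$, and for order~$4$ the semiprimitive condition is $p\equiv 3\pmod 4$, which again excludes $p\equiv 5\pmod 8$. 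Yet $q=5$ already satisfies the hypothesis of the lemma. So the bridge you assert between ``semiprimitive'' and ``$q\equiv 5,7\pmod 8$'' simply does not exist, and a correct argument must exploit finer structure --- for instance the CM by $\mathbb{Z}[\zeta_8]$ and the splitting behaviour of $p$ in $\mathbb{Q}(\zeta_8)$, or the explicit quotient maps from $\mathfrak{C}$ to elliptic curves --- rather than a blanket semiprimitivity claim.

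\textbf{Part (ii).} The case $q\equiv 1\pmod 5$ is not disposed of. You say the Frobenius eigenvalues over $\mathbb{F}_{q^2}$ are squares of those over $\mathbb{F}_q$ and hence ``cannot all equal $-q$'', but this is not automatic: a genus-$2$ curve over $\mathbb{F}_q$ can have $L$-polynomial $(1+qt^2)^2$, i.e.\ all eigenvalues $\pm i\sqrt{q}$, and then the eigenvalues over $\mathbb{F}_{q^2}$ \emph{are} all $-q$. Ruling this out for $y^2=x^5+1$ when $q\equiv 1\pmod 5$ requires an actual computation (the Jacobi sums $J(\psi,\chi)$ with $\psi$ of order~$5$ are then genuine Weil numbers in $\mathbb{Z}[\zeta_5]$ that are \emph{not} real multiples of roots of unity, because $p$ splits in $\mathbb{Q}(\zeta_5)$), and that step is missing from your sketch.
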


In the following lemma, we recall the maximality of a curve over a constant field extension\cite[Prop. 2]{A_note_on_certain_maximal_hyperelliptic_curves}. 

\begin{lemma}\label{maximality}
    Let $\mathfrak{C}$ be a maximal curve over $\mathbb{F}_{q^2}$. Then $\mathfrak{C}$ is maximal over the constant field extension $\mathbb{F}_{q^{2s}}$ if and only if $s$ is odd. 
\end{lemma}

The divisor group of $F/\mathbb{F}_q$ is defined as the free abelian group which is generated by $\mathbb{P}_F$; it is denoted by $\text{Div}(F)$. Two divisors $D,D^\prime\in\text{Div}(F)$ are said to be linearly equivalent, written $D\sim D^\prime$, if $D=D^\prime+(z)^{F}$ for some $z\in F\backslash\{0\}$. We denote the divisor class group of $\mathfrak{C}$ by $\text{Cl}(F)$, which is defined as
\[
    \text{Cl}(F)=\text{Div}(F)/\sim.
\]
The class of a divisor $D$ in $\text{Cl}(F)$ will be denoted by $[D]$. We define $\text{Div}^0(F)$ as the degree zero subgroup of $\text{Div}(F)$ and define $\text{Cl}^0(F)$ as the degree zero subgroup of $\text{Cl}(F)$. To describe elements of $\text{Cl}^0(F)$ we will need an effective divisor $D_{\infty}$, and this divisor will be given as follow\cite[Def. 3]{Efficient_Hyperelliptic_Arithmetic_Using_Balanced_Representation_for_Divisors}.

\begin{definition}
    \begin{itemize}
        \item If $\mathfrak{C}$ has a unique point at infinity $\infty$, then $D_\infty=2P_\infty$.
        \item If $\mathfrak{C}$ has two points at infinity $\infty^+$ and $\infty^-$, then $D_\infty=P_{\infty^+}+P_{\infty^-}$.
    \end{itemize}
\end{definition}

The following lemma gives a description of the elements of $\text{Cl}^0(F)$\cite[Prop. 1]{Efficient_Hyperelliptic_Arithmetic_Using_Balanced_Representation_for_Divisors}.

\begin{lemma}\label{unique}
    Let $D_\infty$ be defined as above, and let $D\in{\rm Div}^0(F)$. Then $[D]$ has a unique representative in ${\rm{Cl}}^0(F)$ of the form $[D_0-D_\infty]$, where $D_0=\sum_{i=1}^2 P_i\in {\rm{Div}}(F)$ is an effective divisor of degree 2 whose affine part satisfies $P_i\neq \bar{P}_j$ for all $i\neq j$.
\end{lemma}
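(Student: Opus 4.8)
This is the classical assertion that on a genus-$2$ hyperelliptic curve every degree-$0$ divisor class has a unique reduced (balanced) representative; I would prove existence from Riemann--Roch plus a one-step reduction, and uniqueness from Riemann--Roch plus the fact that a hyperelliptic curve carries a single degree-two pencil. Throughout, write $\ell(A)=\dim_{\mathbb F_q}\mathcal L(A)$ and let $K$ be a canonical divisor of $F$. For existence, given $D\in\mathrm{Div}^0(F)$ the divisor $D+D_\infty$ has degree $2$, so Riemann--Roch gives $\ell(D+D_\infty)\ge 2+1-g=1$; picking $0\neq z\in\mathcal L(D+D_\infty)$ and setting $D_1:=(z)^F+D+D_\infty$ produces an effective divisor of degree $2$ with $D_1\sim D+D_\infty$, hence $[D]=[D_1-D_\infty]$. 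Writing $D_1=P_1+P_2$: if the affine part of $D_1$ contains no conjugate pair (i.e.\ $P_i\neq\bar P_j$ for $i\neq j$ among the affine $P_i$), take $D_0=D_1$. Otherwise the affine part of $D_1$ equals $P+\bar P$ for an affine point $P$ (the Weierstrass case $P=\bar P$ included); since $P+\bar P$ is the fibre over $x(P)$ of the degree-$2$ map $x\colon\mathfrak C\to\mathbb P^1$ while $D_\infty$ is its fibre over $\infty$, we get $P+\bar P\sim D_\infty$, so $D_1\sim D_\infty$, thus $D\sim 0$ and $[D]=[D_\infty-D_\infty]$; then $D_0:=D_\infty$, whose affine part is empty, does the job.

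For uniqueness, suppose $D_0,D_0'$ are effective of degree $2$, both satisfy the affine-part condition, and $D_0\sim D_0'$ while $D_0\neq D_0'$. Then $D_0-D_0'=(w)^F$ for some nonconstant $w\in F$, so $w\in\mathcal L(D_0')\setminus\mathbb F_q$ and $\ell(D_0')\ge 2$. Applying Riemann--Roch to the degree-$2$ divisor $D_0'$ forces $\ell(K-D_0')=\ell(D_0')-1\ge 1$ with $\deg(K-D_0')=0$; a degree-$0$ divisor with nonzero Riemann--Roch space is principal, so $\ell(D_0')=2$ and $D_0'\sim K$. The same computation together with $\{1,x\}\subseteq\mathcal L(D_\infty)$ gives $\ell(D_\infty)=2$ and $D_\infty\sim K$, whence $D_0'\in|D_\infty|$. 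But $\mathcal L(D_\infty)=\langle 1,x\rangle$, so every element of $|D_\infty|$ is either $D_\infty$ itself or a fibre $P+\bar P$ of $x$ over an affine point; the latter violates the affine-part condition, so $D_0'=D_\infty$, and likewise $D_0=D_\infty$ (since $\ell(D_0)=\ell(D_0')=2$), contradicting $D_0\neq D_0'$. Hence $D_0=D_0'$.

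The analytic input is routine Riemann--Roch for degree-$2$ divisors on a genus-$2$ curve; the delicate point, and the only place the specific hyperelliptic geometry is used, is the bookkeeping at infinity: identifying precisely which effective degree-$2$ divisors are excluded by the affine-part condition and recognizing them as exactly the nontrivial members of the hyperelliptic pencil $|D_\infty|$, and handling uniformly the Weierstrass-point case together with the two model cases $\deg f=5$ and $\deg f=6$ (one versus two points at infinity, which changes the shape of $D_\infty$).
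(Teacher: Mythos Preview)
Your argument is correct. The paper itself does not supply a proof of this lemma; it is quoted verbatim from \cite[Prop.~1]{Efficient_Hyperelliptic_Arithmetic_Using_Balanced_Representation_for_Divisors} and used as a black box (notably in the proof of Proposition~\ref{e_i_local}). So there is no ``paper's own proof'' to compare against, and your Riemann--Roch argument stands on its own.

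A few remarks on the write-up. The existence step is clean: Riemann--Roch produces an effective $D_1\sim D+D_\infty$, and the only obstruction to $D_1$ being reduced is $D_1=P+\bar P$ with $P$ affine, which you correctly recognize as a fibre of $x$ and hence $\sim D_\infty$, collapsing to $D_0=D_\infty$. The uniqueness step is also right, and the key observation---that $\ell(D_0')\ge 2$ forces $D_0'\sim K\sim D_\infty$, while $\mathcal L(D_\infty)=\langle 1,x\rangle$ so $|D_\infty|$ consists only of $D_\infty$ and the forbidden fibres $P+\bar P$---is exactly the hyperelliptic fact that the degree-two pencil is unique. One small point worth making explicit for a reader: in the $\deg f=6$ model the two points at infinity are swapped by the hyperelliptic involution, so divisors like $2P_{\infty^+}$ have empty affine part and satisfy the condition, but they are \emph{not} in $|D_\infty|$ (since $P_{\infty^+}-P_{\infty^-}$ is not principal on a genus-$2$ curve), so no contradiction arises there. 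Your closing paragraph already flags this bookkeeping.
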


\subsection{Algebraic Geometry Codes}
For more details of algebraic geometry codes, the reader may refer to \cite{Algebraic-Geometric_Codes}. Let $F/\mathbb{F}_q$ be a function field of genus $g$ with the full constant field $\mathbb{F}_q$. Let $D=\{P_1,\cdots,P_n\}\subseteq\mathbb{P}_F^1$ be a set of $n$ distinct rational places of $F$. For a divisor $G$ of $F/\mathbb{F}_q$ with $2g-2<\deg G<n$ and ${\rm{supp}}(G)\cap D=\emptyset$, the algebraic geometry code associated with the divisors $D$ and $G$ is defined as
\[
    C_{\mathcal{L}}(D,G):=\{(x(P_1),\,\cdots,\,x(P_n)):x\in\mathcal{L}(G)\}\subseteq\mathbb{F}_q^n,
\]
where $\mathcal{L}(G)$ is the Riemann-Roch space with the dimension $\dim_{\mathbb{F}_q}\mathcal{L}(G)=\deg G+1-g$ from the Riemann-Roch Theorem \cite[Thm. 1.5.15]{Algebraic_Function_Fields_and_Codes}. Then the code $C_{\mathcal{L}}(D,G)$ is an $[n,k,d]_q$ linear code with dimension $k=\dim_{\mathbb{F}_q}\mathcal{L}(G)$ and minimum distance $d\ge n-\deg G$. If $V$ is a subspace of $\mathcal{L}(G)$, then we can define a subcode of $C_{\mathcal{L}}(D,G)$ by 
\[
    C_{\mathcal{L}}(D,V):=\{(x(P_1),\,\cdots,\,x(P_n)):x\in V\}.
\]
Then the dimension of $C_{\mathcal{L}}(D,V)$ is the dimension of the space $V$ over $\mathbb{F}_q$ and the minimum distance of $C_{\mathcal{L}}(D,V)$ is still lower bounded by $n-\deg G$.

In \cite{Construction_of_Optimal_Locally_Repairable_Codes_via_Automorphism_Groups_of_Rational_Function_Fields}, Jin et al. have modified the above construction. Keep the notations above and let $m_i=v_{P_i}(G)$. Choose a local parameter $\pi_{P_i}$ of $P_i$ for each $i\in\{1,\,\cdots,\,n\}$. Then for any nonzero $x\in \mathcal{L}(G)$, we have $v_{P_i}(\pi_{P_i}^{m_i}x)=m_i+v_{P_i}(x)\ge m_i-v_{P_i}(G)=0$. Define a modified algebraic geometry code as follows
\[
     C_{\mathcal{L}}(D,G):=\{((\pi_{P_1}^{m_1}x)(P_1),\,\cdots,\,(\pi_{P_n}^{m_n}x)(P_n)):x\in\mathcal{L}(G)\}.
\]
Then $C_{\mathcal{L}}(D,G)$ is still an $[n,\deg G+1-g,\ge n-\deg G]_q$ linear code. Now for a subspace $V$ of $\mathcal{L}(G)$, we can define a subcode of $C_{\mathcal{L}}(D,G)$ by 
\[
    C_{\mathcal{L}}(D,V):=\{((\pi_1^{m_1}x)(P_1),\,\cdots,\,(\pi_n^{m_n}x)(P_n)):x\in V\}.
\]
Again, the dimension of $C_{\mathcal{L}}(D,V)$ is the dimension of the space $V$ over $\mathbb{F}_q$ and the minimum distance of $C_{\mathcal{L}}(D,V)$ is lower bounded by $n-\deg G $.

\subsection{Theory of Function Fields}
Let $F/\mathbb{F}_q$ be a function field of genus $g(F)$ with the full constant field $\mathbb{F}_q$. Let $\mathbb{P}_F$ denote the set of places of $F$ and let ${\rm{Aut}}(F/\mathbb{F}_q)$ be the automorphism group of $F$ over $\mathbb{F}_q$, i.e.
\[
    {\rm{Aut}}(F/\mathbb{F}_q)=\{\sigma:\sigma \text{ is an } \mathbb{F}_q\text{-automorphism of } E\}.
\]
Now let $\mathcal{G}$ be a finite subgroup of ${\rm{Aut}}(F/\mathbb{F}_q)$. The fixed subfield of $F$ with respect to $\mathcal{G}$ is defined by
\[
    F^{\mathcal{G}}=\{z\in F:\sigma(z)=z \text{ for all } \sigma\in\mathcal{G}\}.
\]
From the Galois theory, $F/F^{\mathcal{G}}$ is a Galois extension with ${\rm{Gal}}(F/F^{\mathcal{G}})=\mathcal{G}$. Moreover, $F^{\mathcal{G}}/\mathbb{F}_q$ is also a function field with the full constant field $\mathbb{F}_q$. By \cite[Lem. 3.5.2]{Algebraic_Function_Fields_and_Codes}, for any automorphism $\sigma\in{\rm{Gal}}(F/F^{\mathcal{G}})$ and any place $P\in\mathbb{P}_F$, then $\sigma(P):=\{\sigma(z):z\in P\}$ is a place of $F$ as well. Let $g(F^{\mathcal{G}})$ denote the genus of $F^{\mathcal{G}}$. Then the Hurwitz Genus Formula\cite[Thm. 3.4.13]{Algebraic_Function_Fields_and_Codes} yields
\begin{equation}\label{Hurwitz_genus_formula}
    2g(F)-2=[F:F^{\mathcal{G}}](2g(F^{\mathcal{G}})-2)+\deg {\rm{Diff}}(F/F^{\mathcal{G}}),
\end{equation}
where ${\rm{Diff}}(F/F^{\mathcal{G}})$ stands for the different of $F/F^{\mathcal{G}}$.

\subsection{Automorphism Groups of Hyperelliptic Curves with Genus 2}
The isomorphisms between two hyperelliptic curves correspond, in terms of hyperelliptic models, to transformations of the type
\begin{equation}\label{trans_eq}
    x^\prime=\frac{ax+b}{cx+d},\quad y^\prime=\frac{(ad-bc)y}{(cx+d)^3},
\end{equation}
associated to a uniquely determined matrix
\[
    M=\begin{pmatrix}
a & b \\
c & d
\end{pmatrix}\in{\rm{GL}}_2(\overline{\mathbb{F}}_q).
\]
The field of definition of that isomorphism is the field generated, over the common field of definition of the two curves, by the coefficients of the matrix $M$. In particular, for every curve $\mathfrak{C}/\mathbb{F}_q$ of genus 2, and once a hyperelliptic model for $\mathfrak{C}$ is fixed, the group of automorphisms ${\rm{Aut}}(\mathfrak{C})={\rm{Aut}}_{\overline{\mathbb{F}}_q}(\mathfrak{C})$ can be identified with a subgroup of ${\rm{GL}}_2(\overline{\mathbb{F}}_q)$ which is closed by the Galois action of the group $G_{\mathbb{F}_q}$. Here $G_{\mathbb{F}_q}$ is the Galois group of an algebraic closure $\overline{\mathbb{F}}_q/\mathbb{F}_q$. Denote ${\rm{Aut}}(\mathfrak{C}/\mathbb{F}_q)$ the subgroup of ${\rm{Aut}}(\mathfrak{C})$ in which every automorphism is defined over $\mathbb{F}_q$. 

In \cite{On_Binary_Sextics_with_Linear_Transformations_into_Themselves}, Bolza gives the different possibilities for the reduced group of automorphisms ${\rm{Aut}}^\prime(\mathfrak{C})={\rm{Aut}}(\mathfrak{C})/\langle\imath\rangle$ of a genus 2 curve, where $\imath$ is the hyperelliptic involution. The corresponding structures for the full group ${\rm{Aut}}(\mathfrak{C})$ are given in \cite{On_curves_of_genus_2_with_Jacobian_of_GL_2-type}. The picture, outside of characteristics 2, 3 and 5, is the following: the group ${\rm{Aut}}(\mathfrak{C})$ is isomorphic to one of the groups
\[
    C_2,\;V_4,\;D_8,\;D_{12},\;2D_{12},\;\tilde{S}_4,\;C_{10}.
\]
Here $C_n$ denotes the cyclic group of order $n$, $V_4$ is the Klein 4-group, $D_n$ is the dihedral group of order $n$, and $2D_{12}$, $\tilde{S}_4$ are certain 2-coverings of the dihedral and symmetric groups $D_{12}$ and $S_4$, respectively.  

Cardona studies the two families of curves of genus 2 with ${\rm{Aut}}(\mathfrak{C})\simeq D_8$ or $D_{12}$ in \cite{Curves_of_genus_2_with_group_of_automorphisms_isomorphic_to_D_8_or_D12} and genus 2 curves with ${\rm{Aut}}(\mathfrak{C})\simeq V_4$ over an algebraically closed field in \cite{On_the_number_of_curves_of_genus_2_over_a_finite_field}. The specific structures of the automorphism groups ${\rm{Aut}}(\mathfrak{C})$ of many genus 2 curves are also given in \cite{On_the_number_of_curves_of_genus_2_over_a_finite_field}. Now we focus on certain specific hyperelliptic curves and determine their automorphism groups ${\rm{Aut}}(\mathfrak{C}/\mathbb{F}_q)$. 

\begin{lemma}\label{auto_non}
    \begin{itemize}
        \item[(i)] Let $q$ be a power of an odd prime and let $\mathfrak{C}$ be an hyperelliptic curve over $\mathbb{F}_q$ defined by the equation $y^2=x^5+x^3+tx$ for some $t\in \mathbb{F}_q\setminus\{0,1/4,9/100\}$. If $4\mid q-1$ and $t^{1/4}\in\mathbb{F}_q$, then ${\rm{Aut}}(\mathfrak{C}/\mathbb{F}_q)\simeq D_8$. 
        \item[(ii)] Let $q$ be a power of an odd prime other than 3, and let $\mathfrak{C}$ be an hyperelliptic curve over $\mathbb{F}_q$ defined by the equation $y^2=x^6+x^3+t$ for some $t\in \mathbb{F}_q\setminus\{0,1/4,-1/50\}$. If $3\mid q-1$ and $t^{1/6}\in\mathbb{F}_q$, then ${\rm{Aut}}(\mathfrak{C}/\mathbb{F}_q)\simeq D_{12}$. 
    \end{itemize}
\end{lemma}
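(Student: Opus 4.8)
The plan is to exploit the classification of automorphism groups of genus-2 curves outside characteristics 2, 3, 5 (the Bolza/Cardona picture recalled above), together with explicit exhibition of automorphisms. For part (i), I would first write down the obvious automorphisms of $\mathfrak{C}: y^2 = x^5 + x^3 + tx$. Since $f(x) = x^5 + x^3 + tx$ is odd, the map $\sigma: (x,y) \mapsto (-x, iy)$ with $i^2 = -1$ sends $f(x)$ to $-f(-x)\cdot(-1)$... more carefully: $f(-x) = -x^5 - x^3 - tx = -f(x)$, so $(x,y)\mapsto(-x, \zeta y)$ with $\zeta^2 = -1$ is an automorphism provided $\zeta \in \mathbb{F}_q$, which holds since $4 \mid q-1$. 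Together with the hyperelliptic involution $\imath:(x,y)\mapsto(x,-y)$, this already generates a group; I must check that $\sigma$ has order $4$ (indeed $\sigma^2 = \imath$), so $\langle \sigma \rangle \cong C_4$ and $\langle \sigma, \imath\rangle = \langle\sigma\rangle$. To get $D_8$ I need a second generator: here the condition $t^{1/4}\in\mathbb{F}_q$ enters, allowing a transformation of the form $x \mapsto t^{1/2}/x$ scaling $f$ appropriately. One checks $f(t^{1/2}/x) = t^{1/2}x^{-5}(x^5 + \cdots)$ matches $t^{1/2}/x^5 \cdot f(x)$ up to the required power, yielding $\tau:(x,y)\mapsto (t^{1/2}/x,\, c\,y/x^3)$ for a suitable constant $c\in\mathbb{F}_q$ (whose existence is exactly what $t^{1/4}\in\mathbb{F}_q$ guarantees). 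Then I verify the dihedral relations $\tau^2 = \mathrm{id}$ (or $\imath$) and $\tau\sigma\tau^{-1} = \sigma^{-1}$, giving a subgroup isomorphic to $D_8$ inside $\mathrm{Aut}(\mathfrak{C}/\mathbb{F}_q)$.

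Next I invoke the classification: $\mathrm{Aut}(\mathfrak{C})$ over $\overline{\mathbb{F}}_q$ is one of $C_2, V_4, D_8, D_{12}, 2D_{12}, \tilde S_4, C_{10}$. Having produced a $D_8$ already forces $\mathrm{Aut}(\mathfrak{C}) \in \{D_8, 2D_{12}, \tilde S_4\}$ (the only groups on the list containing $D_8$). To rule out $2D_{12}$ and $\tilde S_4$ I would appeal to Cardona's results \cite{Curves_of_genus_2_with_group_of_automorphisms_isomorphic_to_D_8_or_D12}: curves with $\mathrm{Aut}(\mathfrak{C}) \cong D_8$ form a one-parameter family with a known normal form (typically $y^2 = x^5 + x^3 + tx$ or $y^2 = x(x^2-1)(x^2-t)$), and the strictly larger groups $2D_{12}, \tilde S_4$ occur only for finitely many special values of $t$ (the $j$-invariants of the associated elliptic quotients, or the absolute Igusa invariants, take special values). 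The excluded set $\{0, 1/4, 9/100\}$ is precisely the locus where the curve degenerates ($f$ acquires a multiple root at $t=0, 1/4$) or where the automorphism group jumps to $\tilde S_4$ or $2D_{12}$ (at $t = 9/100$). Hence for the remaining $t$, $\mathrm{Aut}(\mathfrak{C}) \cong D_8$ over $\overline{\mathbb{F}}_q$; since all the automorphisms we wrote down are defined over $\mathbb{F}_q$ under our hypotheses, $\mathrm{Aut}(\mathfrak{C}/\mathbb{F}_q) \cong D_8$ as well. Part (ii) is entirely parallel: for $y^2 = x^6 + x^3 + t$ the relevant automorphisms are $(x,y)\mapsto(\omega x, y)$ with $\omega$ a primitive cube root of unity (needs $3 \mid q-1$) and $(x,y)\mapsto(t^{1/3}/x,\, t^{1/2}y/x^3)$ (needs $t^{1/6}\in\mathbb{F}_q$ to have the coefficient in $\mathbb{F}_q$), together generating $D_{12}$; the classification then narrows $\mathrm{Aut}(\mathfrak{C})$ to those groups containing $D_{12}$, namely $D_{12}, 2D_{12}, \tilde S_4$, and Cardona's normal form plus the exclusion of $\{0, 1/4, -1/50\}$ pins it down to $D_{12}$.

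The main obstacle I anticipate is the downward step — showing $\mathrm{Aut}(\mathfrak{C})$ is not strictly larger than $D_8$ (resp. $D_{12}$) — since producing automorphisms is routine but excluding extra ones genuinely requires the classification machinery and a careful identification of the curves $y^2 = x^5 + x^3 + tx$ and $y^2 = x^6 + x^3 + t$ with Cardona's normal forms for the $D_8$ and $D_{12}$ families. In particular I would need to compute the change of variables bringing our model to Cardona's (or compare Igusa/absolute invariants) to see exactly which $t$ correspond to the jump loci, thereby justifying that the three excluded values are the complete list. A secondary, more routine check is the bookkeeping that every exhibited automorphism, including the constants $t^{1/2}, c$ appearing in the $y$-coordinate, lies in $\mathbb{F}_q$ under the stated divisibility and radical hypotheses, so that the group does not shrink upon descent from $\overline{\mathbb{F}}_q$ to $\mathbb{F}_q$.
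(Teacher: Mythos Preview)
Your proposal is correct and follows essentially the same route as the paper: exhibit the explicit automorphisms corresponding to $x\mapsto -x$ (resp.\ $x\mapsto \omega x$) and $x\mapsto t^{1/2}/x$ (resp.\ $x\mapsto t^{1/3}/x$), check they are $\mathbb{F}_q$-rational under the stated hypotheses, and invoke Cardona's classification for the ``downward step.'' The only difference is efficiency: the paper cites Cardona's Propositions~2.1 and~2.2 as black boxes that already assert $\mathrm{Aut}(\mathfrak{C})=\langle U,V\rangle\simeq D_8$ (resp.\ $D_{12}$) for $t$ outside the excluded set, so the exclusion of larger groups comes for free and no separate analysis of the jump loci is needed on your end.
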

\begin{proof}
    (i) For $t\in \mathbb{F}_q\setminus\{0,1/4,9/100\}$, let
    \[
        U=\begin{pmatrix}
        -(-1)^{1/2} & 0 \\
        0 & (-1)^{1/2}
        \end{pmatrix}\quad {\rm and} \quad V=\begin{pmatrix}
        0 & t^{1/4} \\
        t^{-1/4} & 0
        \end{pmatrix}.
    \]
    By \cite[Prop. 2.1]{Curves_of_genus_2_with_group_of_automorphisms_isomorphic_to_D_8_or_D12}, the automorphism group of $\mathfrak{C}:y^2=x^5+x^3+tx$ is
    \[
        {\rm{Aut}}(\mathfrak{C})=\langle \,U,\,V\,\rangle\simeq D_8.
    \]
    Since $4\mid q-1$ and $t^{1/4}\in\mathbb{F}_q$, we have $(-1)^{1/2}\in\mathbb{F}_q$ and then $U,V\in {\rm{GL}}_2(\mathbb{F}_q)$. Thus 
    \[
        {\rm{Aut}}(\mathfrak{C}/\mathbb{F}_q)={\rm{Aut}}(\mathfrak{C})\simeq D_8. 
    \]
    
    (ii) By \cite[Prop. 2.2]{Curves_of_genus_2_with_group_of_automorphisms_isomorphic_to_D_8_or_D12}, the automorphism group of $\mathfrak{C}:y^2=x^6+x^3+t$ is
    \[
        {\rm{Aut}}(\mathfrak{C})\simeq D_{12},
    \]
    for $t\in \mathbb{F}_q\setminus\{0,1/4,-1/50\}$. Let
    \[
        U=\begin{pmatrix}
        -\alpha^2 & 0 \\
        0 & -\alpha
        \end{pmatrix}\quad {\rm and} \quad V=\begin{pmatrix}
        0 & t^{1/6} \\
        t^{-1/6} & 0
        \end{pmatrix},
    \]
    where ${\rm ord}(\alpha)=3$. Let $\sigma_U,\sigma_V$ be the isomorphisms corresponding to $U,V$ respectively. By (\ref{trans_eq}), we have
    \[
        \sigma_U(x)=\alpha x,\quad \sigma_U(y)=-y\quad {\rm and}\quad \sigma_V(x)=\frac{t^{1/3}}{x},\quad \sigma_V(y)=\frac{-y}{t^{-1/2}x^3}.
    \]
    Then we can verify that $\sigma_U,\sigma_V$ are indeed the automorphisms of $\mathfrak{C}$. Moreover, by calculation, 
    \[
        \langle\,U,\,V\mid U^6=V^2=I,\,V^{-1}UV=U^{-1}\rangle\simeq D_{12}.
    \]
    Thus 
    \[
        {\rm{Aut}}(\mathfrak{C})=\langle\,U,\,V\,\rangle.
    \]
    Since $3\mid q-1$ and $t^{1/6}\in\mathbb{F}_q$, we have $\alpha\in\mathbb{F}_q$ and then $U,V\in {\rm{GL}}_2(\mathbb{F}_q)$. Hence 
    \[
        {\rm{Aut}}(\mathfrak{C}/\mathbb{F}_q)={\rm{Aut}}(\mathfrak{C})=\langle\,U,\,V\,\rangle\simeq D_{12}. 
    \]
    The proof is completed. 
\end{proof}

\begin{lemma}\label{auto_max_1}
    Let $q$ be a power of an odd prime such that $8\mid q-1$ and $2^{1/2}\in\mathbb{F}_q$. Let $\mathfrak{C}$ be an hyperelliptic curve over $\mathbb{F}_q$ defined by the equation $y^2=x^5+x$. 
    \begin{itemize}
        \item[(i)]  If ${\rm char}\; \mathbb{F}_q\neq3$, $5$, then ${\rm{Aut}}(\mathfrak{C}/\mathbb{F}_q)\simeq \tilde{S}_4$. 
        \item[(ii)] If ${\rm char}\; \mathbb{F}_q=5$, then ${\rm{Aut}}(\mathfrak{C}/\mathbb{F}_q)\simeq \tilde{S}_5$. 
    \end{itemize}
\end{lemma}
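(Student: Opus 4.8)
\emph{Proof idea.} The plan is to follow the template of the proof of Lemma~\ref{auto_non}. First I would determine the geometric automorphism group ${\rm Aut}(\mathfrak{C})={\rm Aut}_{\overline{\mathbb{F}}_q}(\mathfrak{C})$ from the classification of automorphism groups of genus $2$ curves recalled above; then exhibit an explicit set of elements of ${\rm Aut}(\mathfrak{C})$, written as matrices in ${\rm GL}_2(\overline{\mathbb{F}}_q)$ via (\ref{trans_eq}), that generate it; and finally check that under the hypotheses $8\mid q-1$ and $2^{1/2}\in\mathbb{F}_q$ all of these matrices lie in ${\rm GL}_2(\mathbb{F}_q)$. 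Since an automorphism of $\mathfrak{C}$ is defined over $\mathbb{F}_q$ precisely when its (uniquely determined) matrix lies in ${\rm GL}_2(\mathbb{F}_q)$, this forces ${\rm Aut}(\mathfrak{C}/\mathbb{F}_q)={\rm Aut}(\mathfrak{C})$. Write $\zeta$ for a primitive $8$th root of unity. Because $8\mid q-1$ we have $\zeta\in\mathbb{F}_q$, hence $i:=\zeta^{2}\in\mathbb{F}_q$ and (since $(\zeta+\zeta^{-1})^{2}=2$) also $\sqrt{2}:=\zeta+\zeta^{-1}\in\mathbb{F}_q$; in fact $8\mid q-1$ already forces $2$ to be a square in $\mathbb{F}_q$, so the hypothesis $2^{1/2}\in\mathbb{F}_q$ serves only to put $\sqrt{2}$ explicitly at our disposal. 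The branch locus of $\mathfrak{C}\colon y^{2}=x^{5}+x=x(x^{4}+1)$, read on the $x$-line, is $\{0,\infty\}\cup\zeta\mu_{4}$ with $\mu_{4}=\{1,i,-1,-i\}$: the octahedral configuration rescaled by $\zeta$.

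For part (i), when ${\rm char}\,\mathbb{F}_q\neq2,3,5$ the stabiliser of the octahedral configuration in ${\rm PGL}_2$ is $S_4$ and, by the cited work on curves of genus $2$ with Jacobian of ${\rm GL}_2$-type, ${\rm Aut}(\mathfrak{C})$ is the corresponding double cover $\tilde S_4$, of order $48$. I would present it by the two automorphisms $\sigma_1\colon(x,y)\mapsto(ix,\zeta y)$ and $\sigma_2\colon(x,y)\mapsto\bigl(\zeta\tfrac{ix+\zeta}{-ix+\zeta},\,\tfrac{\sqrt{-8}\,y}{(-ix+\zeta)^{3}}\bigr)$, whose matrices in the sense of (\ref{trans_eq}) are $M_1=\left(\begin{smallmatrix}\zeta^{3}&0\\0&\zeta\end{smallmatrix}\right)$ and $M_2=\tfrac{i}{\sqrt{2}}\left(\begin{smallmatrix}\zeta^{3}&\zeta^{2}\\-\zeta^{2}&\zeta\end{smallmatrix}\right)$. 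One checks with (\ref{trans_eq}) that $\sigma_1,\sigma_2$ are automorphisms of $\mathfrak{C}$, and they are defined over $\mathbb{F}_q$ because $\sqrt{-8}=2i\sqrt{2}\in\mathbb{F}_q$ and $i/\sqrt{2}\in\mathbb{F}_q$; thus $M_1,M_2\in{\rm GL}_2(\mathbb{F}_q)$. In the reduced group $\sigma_1$ induces the order-$4$ rotation $x\mapsto ix$ and $\sigma_2$ an order-$3$ rotation, and a subgroup of $S_4$ containing elements of orders $3$ and $4$ has order divisible by $12$ and is not $A_4$ (which has no element of order $4$), hence equals $S_4$. Moreover $M_1^{4}=-I$, which is the matrix of the hyperelliptic involution $\imath$. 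Therefore $\langle M_1,M_2\rangle$ surjects onto $S_4$ and contains $\imath$, so its order is at least $48$; being contained in ${\rm Aut}(\mathfrak{C})=\tilde S_4$ it equals $\tilde S_4$. Hence ${\rm Aut}(\mathfrak{C}/\mathbb{F}_q)={\rm Aut}(\mathfrak{C})\simeq\tilde S_4$. (The map $(x,y)\mapsto(1/x,-y/x^{3})$, with matrix $\left(\begin{smallmatrix}0&1\\1&0\end{smallmatrix}\right)$, is another convenient $\mathbb{F}_q$-rational element.)

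For part (ii), in characteristic $5$ one has $\mu_{4}=\{1,2,3,4\}=\mathbb{F}_{5}^{\ast}$, so the branch locus becomes $\{0,\infty\}\cup\zeta\mathbb{F}_{5}^{\ast}$, which the $\mathbb{F}_q$-rational Möbius map $x\mapsto x/\zeta$ carries onto $\mathbb{P}^{1}(\mathbb{F}_{5})$; hence its stabiliser in ${\rm PGL}_2$ is ${\rm PGL}_2(\mathbb{F}_{5})\simeq S_5$ and, as in the classification, ${\rm Aut}(\mathfrak{C})$ is the double cover $\tilde S_5$, of order $240$. The new symmetry beyond the $\tilde S_4$ of part (i) is the translation $\tau\colon(x,y)\mapsto(x+\zeta,y)$, with matrix $\left(\begin{smallmatrix}1&\zeta\\0&1\end{smallmatrix}\right)\in{\rm GL}_2(\mathbb{F}_q)$: in characteristic $5$, $(x+\zeta)^{5}+(x+\zeta)=x^{5}+x+\zeta(\zeta^{4}+1)=x^{5}+x$, so $\tau\in{\rm Aut}(\mathfrak{C})$ and clearly ${\rm ord}(\tau)=5$. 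Then $M_1,M_2$ and $\tau$ all lie in ${\rm Aut}(\mathfrak{C}/\mathbb{F}_q)$, and the subgroup they generate has order divisible by $48$ and by $5$, hence by $240$; since it sits inside $\tilde S_5$ it equals $\tilde S_5$. Therefore ${\rm Aut}(\mathfrak{C}/\mathbb{F}_q)={\rm Aut}(\mathfrak{C})\simeq\tilde S_5$.

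The step I expect to be the main obstacle is the third one: exhibiting $\mathbb{F}_q$-rational matrices for the non-monomial generators. Unlike in ${\rm PGL}_2$, the matrix attached to an automorphism through (\ref{trans_eq}) is rigid, and for a degree-$5$ model one must carefully track how the sextic $x(x^{4}+1)$ transforms under each Möbius map before the $y$-factor — and hence the square root it contains (here $\sqrt{-8}$, together with the normalising scalar $i/\sqrt{2}$) — can be computed and seen to lie in $\mathbb{F}_q$; this is exactly where $8\mid q-1$ (providing $\zeta$ and $i$) and $2^{1/2}\in\mathbb{F}_q$ are used. A secondary point is to make sure, when invoking the classification, that the central extension of $S_4$ (resp. $S_5$) by $C_2$ that appears is the specific double cover $\tilde S_4$ (resp. $\tilde S_5$) and not another one; this is precisely what the cited work on genus-$2$ curves with ${\rm GL}_2$-type Jacobian records.
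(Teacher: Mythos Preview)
Your proposal is correct and reaches the same conclusion, but the route differs from the paper's. The paper does not work on $\mathfrak{C}\colon y^2=x^5+x$ directly: it introduces the auxiliary curve $\mathfrak{C}'\colon y^2=x^5-x$, for which explicit generating matrices of ${\rm Aut}(\mathfrak{C}')$ are tabulated in \cite{On_curves_of_genus_2_with_Jacobian_of_GL_2-type} (for $\tilde S_4$) and \cite{On_the_number_of_curves_of_genus_2_over_a_finite_field} (for $\tilde S_5$), then transports these generators to $\mathfrak{C}$ by conjugating with an explicit isomorphism $T$ (whose entries involve $(-1)^{1/8}$, a $16$th root of unity not assumed to lie in $\mathbb{F}_q$), and finally checks that the conjugated matrices $U'=TUT^{-1}$, $V'=TVT^{-1}$ (and $W'$ in characteristic~$5$) happen to have entries in $\mathbb{F}_q$ under the stated hypotheses. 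By contrast, you bypass $\mathfrak{C}'$ entirely: you read the reduced automorphism group off the branch locus $\{0,\infty\}\cup\zeta\mu_4$ (octahedral, hence $S_4$; in characteristic~$5$ this becomes $\mathbb{P}^1(\mathbb{F}_5)$ up to an $\mathbb{F}_q$-rational M\"obius map, hence $S_5$), write down generators $M_1,M_2$ (and $\tau$) for $\mathfrak{C}$ itself, verify by hand that they are automorphisms, and conclude $\langle M_1,M_2\rangle=\tilde S_4$ (resp.\ $\langle M_1,M_2,\tau\rangle=\tilde S_5$) by an order count. The paper's approach minimises computation by leaning on the cited tables, at the cost of the somewhat opaque conjugation step through a $16$th root of unity; your approach is more self-contained and makes transparent \emph{why} the hypotheses $8\mid q-1$ and $2^{1/2}\in\mathbb{F}_q$ are exactly what is needed (indeed, as you note, $8\mid q-1$ already provides $\zeta+\zeta^{-1}$ as a square root of $2$, so the second hypothesis is redundant --- a point the paper does not remark on).
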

\begin{proof}
    Let $\mathfrak{C}^\prime$ be an hyperelliptic curve over $\mathbb{F}_q$ defined by the equation $y^2=x^5-x$. Let 
    \[
        T=\begin{pmatrix}
        0 & (-1)^{1/8} \\
        -(-1)^{3/8} & 0
        \end{pmatrix}.
    \]
    Let $\sigma_T$ be the isomorphism corresponding to $T$. By (\ref{trans_eq}), we have
    \[
        \sigma_T(x)=\frac{-1}{(-1)^{1/4}x},\quad \sigma_T(y)=\frac{(-1)^{1/2}y}{(-1)^{1/8}x^3}.
    \]
    Then we can verify that $\sigma_T$ is the isomorphism from $\mathfrak{C}^\prime$ to $\mathfrak{C}$. 

    (i) Let 
    \[
        U=2^{-1/2}\begin{pmatrix}
        -1 & -(-1)^{1/2} \\
        (-1)^{1/2} & 1
        \end{pmatrix}\quad {\rm and} \quad V=2^{-1/2}\begin{pmatrix}
        (-1)^{1/2}+1 & 0 \\
        0 & (-1)^{1/2}-1
        \end{pmatrix}.
    \]
    By Tables 1 and 2 in \cite{On_curves_of_genus_2_with_Jacobian_of_GL_2-type}, 
    \[
        {\rm{Aut}}(\mathfrak{C}^\prime)=\langle\,U,\,V\,\rangle\simeq \tilde{S}_4. 
    \]
    Therefore, we can know that the transformations corresponding to $U^\prime=TUT^{-1}$ and $V^\prime=TVT^{-1}$ are the automorphisms of $\mathfrak{C}$. Moreover, 
    \[
        {\rm{Aut}}(\mathfrak{C})=\langle\,U^\prime,\,V^\prime\,\rangle\simeq \tilde{S}_4.
    \]
    By calculation, we have
    \[
        U^\prime=2^{-1/2}\begin{pmatrix}
        1 & -(-1)^{1/4} \\
        (-1)^{3/4} & -1
        \end{pmatrix}\quad {\rm and} \quad V^\prime=2^{-1/2}\begin{pmatrix}
        (-1)^{1/2}-1 & 0 \\
        0 & (-1)^{1/2}+1
        \end{pmatrix}.
    \]
    Since $8\mid q-1$ and $2^{1/2}\in\mathbb{F}_q$, we have $(-1)^{1/4}\in\mathbb{F}_q$ and then $U^\prime,V^\prime\in {\rm{GL}}_2(\mathbb{F}_q)$. Thus 
    \[
        {\rm{Aut}}(\mathfrak{C}/\mathbb{F}_q)={\rm{Aut}}(\mathfrak{C})=\langle\,U^\prime,\,V^\prime\,\rangle\simeq \tilde{S}_4. 
    \]

    (ii) Let 
    \[
        U=\begin{pmatrix}
        0 & 2 \\
        2 & 0
        \end{pmatrix},\quad V=\begin{pmatrix}
        1 & 2 \\
        2 & 0
        \end{pmatrix}\quad {\rm and} \quad W=2^{1/2}\begin{pmatrix}
        2 & 0 \\
        0 & 1
        \end{pmatrix}.
    \]
    By section 2.3.2 in \cite{On_the_number_of_curves_of_genus_2_over_a_finite_field}, 
    \[
        {\rm{Aut}}(\mathfrak{C}^\prime)=\langle\,U,\,V,\,W\,\rangle\simeq \tilde{S}_5. 
    \]
    Similar to (i), we can derive that 
    \[
        {\rm{Aut}}(\mathfrak{C})=\langle\,U^\prime,\,V^\prime,\,W^\prime\,\rangle\simeq \tilde{S}_5,
    \]
    where $U^\prime=TUT^{-1}$, $V^\prime=TVT^{-1}$ and $W^\prime=TWT^{-1}$. More precisely, 
    \[
        U^\prime=\begin{pmatrix}
        0 & -(-1)^{-1/4}\cdot2 \\
        -(-1)^{1/4}\cdot2 & 0
        \end{pmatrix},\enspace V^\prime=\begin{pmatrix}
        0 & -(-1)^{-1/4}\cdot2 \\
        -(-1)^{1/4}\cdot2 & 1
        \end{pmatrix}\enspace {\rm and} \enspace W^\prime=2^{1/2}\begin{pmatrix}
        1 & 0 \\
        0 & 2
        \end{pmatrix}.
    \]
    Since $8\mid q-1$ and $2^{1/2}\in\mathbb{F}_q$, we have $(-1)^{1/4}\in\mathbb{F}_q$ and then $U^\prime,V^\prime,W^\prime\in {\rm{GL}}_2(\mathbb{F}_q)$. Thus 
    \[
        {\rm{Aut}}(\mathfrak{C}/\mathbb{F}_q)={\rm{Aut}}(\mathfrak{C})=\langle\,U^\prime,\,V^\prime,\,W^\prime\,\rangle\simeq \tilde{S}_5.
    \]
    The proof is completed. 
\end{proof}

\begin{lemma}\label{auto_max_2}
    Let $q$ be a power of an odd prime other than 5 and let $\mathfrak{C}$ be an hyperelliptic curve over $\mathbb{F}_q$ defined by the equation $y^2=x^5+1$. If $5\mid q-1$, then ${\rm{Aut}}(\mathfrak{C}/\mathbb{F}_q)\simeq C_{10}$. 
\end{lemma}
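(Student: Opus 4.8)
The plan is to exhibit an explicit automorphism of order $10$ that is already defined over $\mathbb{F}_q$, and then use the Bolza classification recalled above to show that it generates the whole group. Since $q$ is odd and $5\mid q-1$, we have $10\mid q-1$, so $\mathbb{F}_q$ contains a primitive $5$th root of unity $\zeta$. The hyperelliptic involution $\imath\colon(x,y)\mapsto(x,-y)$ and the map $\tau\colon(x,y)\mapsto(\zeta x,y)$ both preserve the equation $y^{2}=x^{5}+1$ (for $\tau$ because $\zeta^{5}=1$); they commute, $\tau$ has order $5$ and $\imath$ has order $2$, so $\sigma:=\tau\imath$ is an automorphism of $\mathfrak{C}$ of order $10$. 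Via (\ref{trans_eq}) it corresponds to the matrix $\begin{pmatrix}-\zeta^{2}&0\\0&-\zeta\end{pmatrix}\in{\rm GL}_2(\mathbb{F}_q)$, so $\langle\sigma\rangle\simeq C_{10}$ is a subgroup of ${\rm Aut}(\mathfrak{C}/\mathbb{F}_q)\subseteq{\rm Aut}(\mathfrak{C})$.

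It then remains to rule out anything strictly larger. When ${\rm char}\,\mathbb{F}_q\neq 2,3,5$, the list recalled above (see \cite{On_Binary_Sextics_with_Linear_Transformations_into_Themselves,On_curves_of_genus_2_with_Jacobian_of_GL_2-type}) says ${\rm Aut}(\mathfrak{C})$ is one of $C_2,V_4,D_8,D_{12},2D_{12},\tilde S_4,C_{10}$, and of these only $C_{10}$ has order divisible by $10$; so Lagrange's theorem applied to the subgroup $\langle\sigma\rangle$ forces ${\rm Aut}(\mathfrak{C})=\langle\sigma\rangle\simeq C_{10}$. For the remaining possibility ${\rm char}\,\mathbb{F}_q=3$ (which still occurs under $5\mid q-1$, e.g. $q=81$) I would instead quote the explicit determination of the automorphism group of $y^{2}=x^{5}+1$ from \cite{On_the_number_of_curves_of_genus_2_over_a_finite_field}, which again yields $C_{10}$. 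In every case the generator $\sigma$ is defined over $\mathbb{F}_q$, hence so is all of $\langle\sigma\rangle={\rm Aut}(\mathfrak{C})$, and therefore ${\rm Aut}(\mathfrak{C}/\mathbb{F}_q)={\rm Aut}(\mathfrak{C})\simeq C_{10}$.

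The argument is short because the real work is hidden in the classification; the one point that genuinely needs care is confirming that $\mathfrak{C}\colon y^{2}=x^{5}+1$ is not one of the special members of the $D_{8}$-, $D_{12}$-, $2D_{12}$- or $\tilde S_4$-families carrying extra automorphisms, and the order/Lagrange trick above is precisely what lets one avoid the usual computation with absolute invariants. The genuine subtlety is thus the characteristic-$3$ case, which lies outside the "${\rm char}\neq 2,3,5$" form of the classification and must be handled by a direct reference. It is also worth recording why $5$ is excluded but $3$ is not: the derivative of $x^{5}+1$ is $5x^{4}$, which shares no root with $x^{5}+1$ exactly when ${\rm char}\,\mathbb{F}_q\neq 5$, so $\mathfrak{C}$ is a smooth genus-$2$ curve precisely in the stated range, whereas in characteristic $5$ one has $x^{5}+1=(x+1)^{5}$ and the model degenerates.
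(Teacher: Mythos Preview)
Your argument is correct, and it takes a somewhat different route from the paper's. The paper does not use the Bolza list as an upper bound: instead it passes through the twist $\mathfrak{C}':y^{2}=x^{5}-1$, writes down an explicit isomorphism $T$ from $\mathfrak{C}'$ to $\mathfrak{C}$ (involving a $10$th root of $-1$), quotes section~2.1.3 of \cite{On_the_number_of_curves_of_genus_2_over_a_finite_field} for the full determination ${\rm Aut}(\mathfrak{C}')=\langle U,V\rangle\simeq C_{10}$ with $U=-I$ and $V=\begin{pmatrix}\zeta^{2}&0\\0&\zeta\end{pmatrix}$, conjugates by $T$, and observes that $TUT^{-1}=U$ and $TVT^{-1}=V$ already lie in ${\rm GL}_2(\mathbb{F}_q)$ once $5\mid q-1$. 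Your generator $\sigma$ is exactly their $UV$, so at the level of the subgroup you produce the same thing; the difference is in how you bound ${\rm Aut}(\mathfrak{C})$ from above. Your Lagrange trick against the list $C_2,V_4,D_8,D_{12},2D_{12},\tilde S_4,C_{10}$ is cleaner and avoids the detour through $\mathfrak{C}'$ and $T$ in characteristic $\neq2,3,5$; the price is that you must treat characteristic $3$ separately, whereas the paper's citation to \cite{On_the_number_of_curves_of_genus_2_over_a_finite_field} covers all odd characteristics $\neq5$ at once and so gives a uniform proof.
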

\begin{proof}
    Let $\mathfrak{C}^\prime$ be an hyperelliptic curve over $\mathbb{F}_q$ defined by the equation $y^2=x^5-1$. Let 
    \[
        T=\begin{pmatrix}
        (-1)^{1/10} & 0 \\
        0 & (-1)^{3/10}
        \end{pmatrix}.
    \]
    Similar to the proof of Lemma \ref{auto_max_1}, we can verify that the transformation corresponding to $T$ is the isomorphism from $\mathfrak{C}^\prime$ to $\mathfrak{C}$. Let 
    \[
        U=\begin{pmatrix}
        -1 & 0 \\
        0 & -1
        \end{pmatrix}\quad {\rm and} \quad V=\begin{pmatrix}
        \zeta^2 & 0 \\
        0 & \zeta
        \end{pmatrix},
    \]
    where ${\rm ord}(\zeta)=5$. By section 2.1.3 in \cite{On_the_number_of_curves_of_genus_2_over_a_finite_field}, 
    \[
        {\rm{Aut}}(\mathfrak{C}^\prime)=\langle\,U,\,V\,\rangle\simeq C_{10}. 
    \]
    Similar to the proof of Lemma \ref{auto_max_1}(i), we can derive that 
    \[
        {\rm{Aut}}(\mathfrak{C})=\langle\,U^\prime,\,V^\prime\,\rangle\simeq C_{10},
    \]
    where $U^\prime=TUT^{-1}$ and $V^\prime=TVT^{-1}$. By computation, we have $U^\prime=U$ and $V^\prime=V$. Hence, we conclude that
    \[
        {\rm{Aut}}(\mathfrak{C}/\mathbb{F}_q)={\rm{Aut}}(\mathfrak{C})=\langle\,U,\,V\,\rangle\simeq C_{10},
    \]
    by $5\mid q-1$. The proof is completed. 
\end{proof}

\section{Construction of locally repairable codes via hyperelliptic curves}\label{construction}

In this section, we initially provide a general framework for constructing locally repairable codes via automorphism groups of hyperelliptic curves with genus $2$. Following this, we present two specific constructions for the case where the locality $r$ is odd or even.

\subsection{A General Framework for Constructing LRCs via Automorphism Groups}

Let $\mathfrak{C}/\mathbb{F}_q$ be a hyperelliptic curve with function field $F$ and ${\rm{Aut}}(F/\mathbb{F}_q)$ be the automorphism group of $F$ over $\mathbb{F}_q$. Let $\mathcal{G}$ be a subgroup of ${\rm{Aut}}(F/\mathbb{F}_q)$ of order $r+1$ and $F^{\mathcal{G}}$ be the fixed subfield of $F$ with respect to $\mathcal{G}$. Then $F/F^{\mathcal{G}}$ is a Galois extension with Galois group ${\rm{Gal}}(F/F^{\mathcal{G}})=\mathcal{G}$.

Let $Q_1,Q_2,\cdots,Q_\ell$ be rational places of $F^{\mathcal{G}}$ which split completely in $F/F^{\mathcal{G}}$. Let $P_{i,1},P_{i,2},\cdots,P_{i,r+1}$ be the $r+1$ rational places of $F$ lying over $Q_i$ for $1\le i\le \ell$. For any $1\le i\le \ell$, set 
\[
    \mathcal{P}_i=\{P_{i,j}:\,1\le j\le r+1\},
\]
and set
\begin{equation}\label{ev_P}
    \mathcal{P}=\{P_{i,j}:1\le i\le \ell,\,1\le j\le r+1\}. 
\end{equation}

Let $e_1,e_2,\cdots,e_r$ be elements of $F$ that are linearly independent over $F^{\mathcal{G}}$ and $v_{P_{i,j}}(e_u)\ge0$ for $1\le i\le \ell,\,1\le j\le r+1$ and $1\le u\le r$. Let $f_1,f_2,\cdots,f_t$ be elements of $F^{\mathcal{G}}$ that are linearly independent over $\mathbb{F}_q$ and $v_{Q_i}(f_v)\ge0$ for $1\le i\le \ell$ and $1\le v\le t$. Define
\begin{equation}\label{func}
    V=\Big\{ \sum_{j=1}^ta_{1,j}f_je_1+\sum_{i=2}^r\big(\sum_{j=1}^{t-1}a_{i,j}f_j\big)e_i:a_{i,j}\in\mathbb{F}_q \Big\}.
\end{equation}

The following lemmas give local recoverability under certain conditions and algebraic geometry code with locality $r$ respectively.  

\begin{lemma}\label{local}\cite[Prop. 4.2]{Locally_Recoverable_Codes_from_Algebraic_Curves_and_Surfaces}
    Let $i$ be an integer between 1 and $s$, and suppose every $r\times r$ submatrix of the matrix 
    \[
        M=\begin{pmatrix}
        e_1(P_{i,1}) & e_2(P_{i,1}) & \cdots & e_r(P_{i,1}) \\
        e_1(P_{i,2}) & e_2(P_{i,2}) & \cdots & e_r(P_{i,2}) \\
        \vdots & \vdots & \ddots & \vdots \\
        e_1(P_{i,r+1}) & e_2(P_{i,r+1}) & \cdots & e_r(P_{i,r+1})
        \end{pmatrix}
    \]
    is invertible. Then the value of $f\in V$ at any place in the set $\mathcal{P}_i$ can be recovered from the values of $f$ at the other $r$ places of the set $\mathcal{P}_i$.
\end{lemma}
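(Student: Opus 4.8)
The plan is to exploit the product structure of the elements of $V$: each $f\in V$ is a combination of the $e_u$ with coefficients in the fixed field $F^{\mathcal{G}}$, and such coefficients are ``constant along the fibre'' over $Q_i$. This reduces the statement to a short linear-algebra argument about the matrix $M$.

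First I would write an arbitrary $f\in V$ in the form $f=\sum_{u=1}^{r}g_u e_u$ with $g_u\in F^{\mathcal{G}}$; explicitly $g_1=\sum_{j=1}^{t}a_{1,j}f_j$ and $g_u=\sum_{j=1}^{t-1}a_{u,j}f_j$ for $2\le u\le r$. Since $v_{Q_i}(f_j)\ge 0$ for all $j$, we get $v_{Q_i}(g_u)\ge 0$; and because $Q_i$ splits completely in $F/F^{\mathcal{G}}$, each $P_{i,j}$ is unramified over $Q_i$ with trivial residue field extension, so $v_{P_{i,j}}(g_u)=v_{Q_i}(g_u)\ge 0$. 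Together with $v_{P_{i,j}}(e_u)\ge 0$ this shows that $f$ is regular at every place of $\mathcal{P}_i$, so all the evaluations below are defined.

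The key observation I would then isolate is that, for $g\in F^{\mathcal{G}}$ with $v_{Q_i}(g)\ge 0$, the residue map $\mathcal{O}_{P_{i,j}}\to\mathbb{F}_q$ restricts on $\mathcal{O}_{P_{i,j}}\cap F^{\mathcal{G}}=\mathcal{O}_{Q_i}$ to the residue map of $Q_i$; hence $g(P_{i,j})=g(Q_i)$ does not depend on $j$. Setting $c_u:=g_u(Q_i)\in\mathbb{F}_q$, this gives $f(P_{i,j})=\sum_{u=1}^{r}c_u\,e_u(P_{i,j})$ for every $j$, i.e.
\[
    \bigl(f(P_{i,1}),\dots,f(P_{i,r+1})\bigr)^{T}=M\,(c_1,\dots,c_r)^{T}.
\]
Now I would finish with linear algebra. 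Since every $r\times r$ submatrix of $M$ is invertible, $\operatorname{rank}M=r$, so the left kernel of $M$ is spanned by a single nonzero vector $\lambda=(\lambda_1,\dots,\lambda_{r+1})^{T}$; moreover each $\lambda_j\ne 0$, since otherwise deleting the $j$-th coordinate of $\lambda$ would yield a nonzero left-kernel vector of the invertible $r\times r$ submatrix of $M$ obtained by removing row $j$, a contradiction. Therefore $\sum_{j=1}^{r+1}\lambda_j\,f(P_{i,j})=\lambda^{T}M\,(c_1,\dots,c_r)^{T}=0$, and for any index $j_0$ we get
\[
    f(P_{i,j_0})=-\lambda_{j_0}^{-1}\sum_{j\ne j_0}\lambda_j\,f(P_{i,j}),
\]
which expresses $f(P_{i,j_0})$ as a fixed $\mathbb{F}_q$-linear combination (not depending on $f$) of the values of $f$ at the remaining $r$ places of $\mathcal{P}_i$. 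That is precisely the asserted local recoverability.

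The rewriting of $f$ and the concluding linear algebra are routine; the one step that deserves care is the middle claim that $g_u(P_{i,j})$ is independent of $j$, and this is exactly where complete splitting of $Q_i$ (equal residue fields, no ramification) enters. Without that hypothesis the $F^{\mathcal{G}}$-coefficients of $f$ would not be recoverable from a single fibre $\mathcal{P}_i$ and the whole mechanism collapses, so I expect that compatibility of residue maps to be the only genuine, if small, obstacle.
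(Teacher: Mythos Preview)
Your argument is correct. The paper itself does not prove this lemma; it simply cites it from \cite[Prop.~4.2]{Locally_Recoverable_Codes_from_Algebraic_Curves_and_Surfaces}, so there is no in-paper proof to compare against. Your proof is the standard one: reduce to the observation that the $F^{\mathcal{G}}$-coefficients $g_u$ take a single value $c_u$ along the fibre $\mathcal{P}_i$ (because $Q_i$ splits completely), and then use that the left kernel of $M$ is spanned by a vector with all entries nonzero. Both steps are handled cleanly, and the regularity checks you include are exactly what is needed to make the evaluations well-defined.
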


\begin{lemma}\label{cons}\cite[Prop. 19]{Optimal_Locally_Repairable_Codes_Via_Elliptic_Curves}
    Let $\mathcal{P}$ and $V$ be defined as above and satisfy the assumption of Lemma \ref{local}. If $V$ is contained in $\mathcal{L}(G)$ for a divisor $G$ of $F$ with $\deg G<\ell(r+1)$ and ${\rm supp}(G)\cap\mathcal{P}=\emptyset$, then the algebraic geometry code
    \[
        C_{\mathcal{L}}(\mathcal{P},V)=\{(f(P))_{P\in\mathcal{P}}:f\in V\}
    \]
    is a $q$-ary $[n,k,d]_q$-locally repairable code with locality $r$, length $n=\ell(r+1)$, dimension $k=rt-(r-1)$ and minimum distance $d\ge n-\deg G$. 
\end{lemma}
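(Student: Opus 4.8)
The plan is to establish, in turn, each piece of data of the code $C_{\mathcal{L}}(\mathcal{P},V)$: length $n=\ell(r+1)$, dimension $k=rt-(r-1)$, minimum distance $d\ge n-\deg G$, and locality $r$. The length, the distance bound, and the locality will come almost for free from the hypotheses together with the facts about (modified) algebraic geometry codes recalled in Section~\ref{pre} and from Lemma~\ref{local}; only the dimension requires an actual computation. For the length: since $Q_1,\dots,Q_\ell$ split completely in $F/F^{\mathcal{G}}$ and $|\mathcal{G}|=r+1$, each fibre $\mathcal{P}_i$ consists of exactly $r+1$ distinct rational places of $F$ and the $\ell$ fibres are pairwise disjoint, so $n=|\mathcal{P}|=\ell(r+1)$. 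Also, because $F/F^{\mathcal{G}}$ is unramified at each $Q_i$ we have $v_{P_{i,j}}(f_j)=v_{Q_i}(f_j)\ge0$, which together with $v_{P_{i,j}}(e_u)\ge0$ gives $v_{P}(f)\ge0$ for every $f\in V$ and every $P\in\mathcal{P}$; hence each $f\in V$ is regular on $\mathcal{P}$ and the evaluation map $f\mapsto(f(P))_{P\in\mathcal{P}}$ on $V$ is well defined, agreeing with the modified evaluation of Section~\ref{pre} because the factors $\pi_P^{v_P(G)}$ are trivial when ${\rm supp}(G)\cap\mathcal{P}=\emptyset$.

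For the dimension I would first show $\dim_{\mathbb{F}_q}V=rt-(r-1)$. Write a typical element of $V$ as $\sum_{i=1}^{r}g_ie_i$ with $g_1=\sum_{j=1}^{t}a_{1,j}f_j$ and $g_i=\sum_{j=1}^{t-1}a_{i,j}f_j$ for $2\le i\le r$; the point is that every $g_i$ lies in the fixed field $F^{\mathcal{G}}$. If $\sum_{i=1}^r g_ie_i=0$, then the $F^{\mathcal{G}}$-linear independence of $e_1,\dots,e_r$ forces $g_i=0$ for all $i$, and then the $\mathbb{F}_q$-linear independence of $f_1,\dots,f_t$ (respectively $f_1,\dots,f_{t-1}$) forces every $a_{i,j}=0$. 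Hence the $t+(r-1)(t-1)=rt-(r-1)$ generators appearing in~(\ref{func}) form an $\mathbb{F}_q$-basis of $V$. Next, $V\subseteq\mathcal{L}(G)$ with $\deg G<n$, and the same bound that gives $d\ge n-\deg G$ for $C_{\mathcal{L}}(\mathcal{P},G)$ shows that no nonzero $f\in\mathcal{L}(G)$ vanishes at all $n$ places of $\mathcal{P}$; thus the evaluation map is injective on $V$ and $k=\dim_{\mathbb{F}_q}C_{\mathcal{L}}(\mathcal{P},V)=\dim_{\mathbb{F}_q}V=rt-(r-1)$.

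The minimum distance is immediate: $C_{\mathcal{L}}(\mathcal{P},V)$ is a subcode of $C_{\mathcal{L}}(\mathcal{P},G)$, whose minimum distance is at least $n-\deg G$ by Section~\ref{pre}, so $d\ge n-\deg G$. For the locality, fix the coordinate indexed by a place $P_{i,j}$ and let $I_{i,j}$ be the set of the remaining $r$ coordinates in the block $\mathcal{P}_i$, so $I_{i,j}\subseteq\{1,\dots,n\}\setminus\{P_{i,j}\}$ and $|I_{i,j}|=r$. By hypothesis $\mathcal{P}$ and $V$ satisfy the assumption of Lemma~\ref{local}, i.e.\ every $r\times r$ submatrix of the block matrix $M$ attached to $\mathcal{P}_i$ is invertible, so Lemma~\ref{local} applies to $\mathcal{P}_i$ and provides an $\mathbb{F}_q$-linear map $\phi$ (obtained by inverting the $r\times r$ submatrix of $M$ indexed by $I_{i,j}$ and multiplying by the row indexed by $P_{i,j}$) with $f(P_{i,j})=\phi\big((f(P))_{P\in I_{i,j}}\big)$ for all $f\in V$. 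Therefore any two codewords of $C_{\mathcal{L}}(\mathcal{P},V)$ that agree on the coordinates in $I_{i,j}$ also agree at $P_{i,j}$, which is exactly the condition $\mathcal{C}_{I_{i,j}}(P_{i,j},a)\cap\mathcal{C}_{I_{i,j}}(P_{i,j},a')=\emptyset$ for $a\ne a'$; letting $i$ and $j$ vary shows that $C_{\mathcal{L}}(\mathcal{P},V)$ has locality $r$.

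The only step with genuine content is the dimension count, and inside it the reduction of $\mathbb{F}_q$-linear independence of the products $f_je_i$ to the two given independence hypotheses; the crucial observation making it go through is that the coefficient $g_i$ of each $e_i$ automatically lies in $F^{\mathcal{G}}$, the very field over which the $e_i$ are assumed independent. The rest is either bookkeeping or a direct appeal to Lemma~\ref{local} and to the algebraic-geometry-code facts already set up in Section~\ref{pre}, so I anticipate no further obstacles.
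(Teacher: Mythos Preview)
Your proof is correct. Note, however, that the paper does not actually prove this lemma: it is quoted verbatim from \cite[Prop.~19]{Optimal_Locally_Repairable_Codes_Via_Elliptic_Curves} and used as a black box, so there is no ``paper's own proof'' to compare against. What you have written is exactly the standard argument one would give for such a statement---the dimension count via the two layers of linear independence (the $e_i$ over $F^{\mathcal{G}}$, then the $f_j$ over $\mathbb{F}_q$), injectivity of evaluation from $\deg G<n$, the distance bound inherited from $C_{\mathcal{L}}(\mathcal{P},G)$, and locality from Lemma~\ref{local}---and it matches how the result is proved in the cited source.
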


Now we focus on the automorphism group ${\rm{Aut}}(\mathfrak{C}/\mathbb{F}_q)$. By considering certain subgroups of ${\rm{Aut}}(\mathfrak{C}/\mathbb{F}_q)$, we can determine $F^{\mathcal{G}}$ and find $e_1,e_2,\cdots,e_r\in F$ that are linearly independent over $F^{\mathcal{G}}$.

\begin{proposition}\label{find_e_i}
    Let $\mathfrak{C}/\mathbb{F}_q$ be a hyperelliptic curve defined by the hyperelliptic model (\ref{hyper_eqa}) and $F=\mathbb{F}_q(x,y)$ be the function field $\mathbb{F}_q(\mathfrak{C})$. Let $\mathcal{G}$ be a subgroup of ${\rm{Aut}}(\mathfrak{C}/\mathbb{F}_q)$. Assume that 
    \[
        \#\mathcal{G}=r+1=2s \quad and \quad \#\{\sigma(x):\sigma\in\mathcal{G}\}=s
    \]
    for a positive integer $s\ge 2$ and $r<q$. Moreover, suppose that there exists a rational place $P_{a,b}\in \mathbb{P}_F^1$ that corresponds to the rational point $(a,b)$ such that $P_{a,b}\cap F^{\mathcal{G}}$ is splitting completely in $F/F^{\mathcal{G}}$ and
    \begin{equation}\label{fix}
        (\bigcup\limits_{\sigma\in\mathcal{G}}{\rm supp}((\sigma(x)-a)_0^{\mathbb{F}_q(x)}))\cap(\bigcup\limits_{\sigma\in\mathcal{G}}{\rm supp}((\sigma(x)-a)_\infty^{\mathbb{F}_q(x)}))=\emptyset.
    \end{equation}
    Let $\sigma_{2i-1}$ for $1\le i\le s$ be the automorphisms of $\mathcal{G}$ with the pairwise distinct $\sigma_{2i-1}(x)$ and $\sigma_1=1$. Let $\sigma_{2i}$ be the automorphisms of $\mathcal{G}$ with $\sigma_{2i-1}(x)=\sigma_{2i}(x)$ for $1\le i\le s$. Let $P_j=\sigma_j(P_{a,b})$ for $1\le j\le r+1$. Then there exists an element $z\in F$ satisfying that $F^{\mathcal{G}}=\mathbb{F}_q(z)$ and there exist elements $e_1,e_2,\cdots,e_r\in F$ that are linearly independent over $F^{\mathcal{G}}$ satisfying that
    \[
        \mathcal{L}(P+\sum_{u=1}^rP_u)=\langle\,e_1,\,\cdots,\,e_r\,\rangle
    \]
    where $P\in\mathbb{P}_F^1$ such that $P\notin \{P_j:1\le j\le r+1\}$.
\end{proposition}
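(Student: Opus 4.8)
The plan is to use the hyperelliptic structure to pin down the fixed field and then build the $e_i$ explicitly from the functions $\sigma(x)$ and $\sigma(y)$. First I would observe that, since $\#\mathcal{G}=2s$ but $\mathcal{G}$ acts on the set $\{\sigma(x):\sigma\in\mathcal{G}\}$ with image of size only $s$, the subgroup $\mathcal{G}\cap\langle\imath\rangle$ is nontrivial, i.e. the hyperelliptic involution $\imath$ lies in $\mathcal{G}$; consequently $F^{\mathcal{G}}\subseteq F^{\langle\imath\rangle}=\mathbb{F}_q(x)$, and in fact $[\,\mathbb{F}_q(x):F^{\mathcal{G}}\,]=s$ with $\mathbb{F}_q(x)/F^{\mathcal{G}}$ Galois, its Galois group being the image $\overline{\mathcal{G}}=\mathcal{G}/\langle\imath\rangle$ acting on $x$ by the Möbius transformations recorded in \eqref{trans_eq}. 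A subgroup of $\mathrm{PGL}_2$ of order $s$ acting on the rational function field $\mathbb{F}_q(x)$ has rational fixed field (Lüroth / the genus-zero case of the Hurwitz formula), so there is $z\in\mathbb{F}_q(x)\subseteq F$ with $F^{\mathcal{G}}=\mathbb{F}_q(z)$; concretely one may take $z=\prod_{\sigma\in\overline{\mathcal{G}}}(\sigma(x)-a)$ or a suitable symmetric rational function of the orbit of $x$, and condition \eqref{fix} is exactly what guarantees this product has the expected zero/pole divisor so that $[\mathbb{F}_q(x):\mathbb{F}_q(z)]=s$ rather than something smaller.

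Next I would produce the $e_i$. Since $F=\mathbb{F}_q(x)\oplus \mathbb{F}_q(x)\,y$ as an $\mathbb{F}_q(x)$-vector space and $\mathbb{F}_q(x)=\bigoplus_{j=0}^{s-1}F^{\mathcal{G}}\,x^j$, one gets a natural $F^{\mathcal{G}}$-basis $\{1,x,\dots,x^{s-1},y,xy,\dots,x^{s-1}y\}$ of $F$, of size $2s=r+1$. The idea is that the $r$ functions $e_1,\dots,e_r$ should be, up to normalization, $r$ of the $2s$ "translated" basis functions $\sigma_j(x)^m$ and $\sigma_j(x)^m\sigma_j(y)$ — equivalently, a basis of $\mathcal{L}(P+\sum_{u=1}^r P_u)$. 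I would fix a place $P\in\mathbb{P}_F^1$ outside the orbit $\{P_1,\dots,P_{r+1}\}=\{\sigma_j(P_{a,b})\}$ and compute $\deg(P+\sum_{u=1}^r P_u)=r+1=2s$; since $g=2$, Riemann–Roch gives $\dim\mathcal{L}(P+\sum_u P_u)=2s+1-2=2s-1=r$, so this space is exactly $r$-dimensional and it suffices to exhibit $r$ linearly independent functions in it. Here I would use functions of the form $1/(\,\prod_{j\in S}(x-a_j)\cdot(\text{denominators from }y)\,)$ built from the coordinates of the points $P_u$ — precisely, rational functions in $x$ and $y$ whose only poles are simple and lie among $P,P_1,\dots,P_r$ — and the splitting-completely hypothesis together with \eqref{fix} ensures these poles are simple and in the right places. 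Linear independence over $F^{\mathcal{G}}$ then follows because the candidate functions already involve the $s$ distinct values $\sigma_j(a)$ and both "halves" ($y$-free and $y$-multiple) of the decomposition $F=\mathbb{F}_q(x)\oplus\mathbb{F}_q(x)y$.

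The main obstacle I anticipate is bookkeeping the poles and zeros of the chosen functions with genuine care: one must check that for the specific divisor $P+\sum_{u=1}^r P_u$ the exhibited generators actually lie in the Riemann–Roch space (no extra poles appear when two of the $\sigma_j(x)$ coincide but the corresponding points $P_{2i-1},P_{2i}$ differ — this is the role of pairing $\sigma_{2i-1},\sigma_{2i}$), and that condition \eqref{fix} rules out the degenerate collisions between zeros and poles of the $\sigma(x)-a$ which would otherwise drop the degree of the extension $\mathbb{F}_q(x)/F^{\mathcal{G}}$ below $s$. A secondary point is verifying that $P_{a,b}\cap F^{\mathcal{G}}$ splitting completely in $F/F^{\mathcal{G}}$ forces its $\mathcal{G}$-orbit to have full size $r+1$, so that $P_1,\dots,P_{r+1}$ really are $r+1$ distinct rational places; this is where the hypothesis $r<q$ and the completeness of the splitting are used. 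Once the divisor arithmetic is nailed down, the dimension count from Riemann–Roch closes the argument immediately, and the element $z$ constructed in the first paragraph gives $F^{\mathcal{G}}=\mathbb{F}_q(z)$.
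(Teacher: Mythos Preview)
Your construction of $z$ is essentially the paper's: they take $z=\prod_{i=1}^{s}1/(\sigma_{2i-1}(x)-a)$, verify $\mathcal{G}$-invariance by the factor permutation you describe, and use \eqref{fix} to see that $(z)_\infty^F=\sum_{j=1}^{r+1}P_j$ has degree $r+1=[F:F^{\mathcal{G}}]$, whence $F^{\mathcal{G}}=\mathbb{F}_q(z)$. Your L\"uroth framing is a perfectly good alternative for the abstract rationality claim. Your Riemann--Roch count $\dim\mathcal{L}(P+\sum_{u=1}^r P_u)=r$ is also correct and is the implicit reason the paper's span claim holds.

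The genuine gap is in the $e_i$. You need $e_1,\dots,e_r$ to be linearly independent over $F^{\mathcal{G}}$, not just over $\mathbb{F}_q$, and your sketch (``they involve both halves of $\mathbb{F}_q(x)\oplus\mathbb{F}_q(x)y$ and the $s$ orbit values'') is not a proof: an arbitrary $\mathbb{F}_q$-basis of the Riemann--Roch space need not be $F^{\mathcal{G}}$-independent (e.g.\ $1$ and $z$ are $\mathbb{F}_q$-independent but $F^{\mathcal{G}}$-dependent), and the natural basis $\{x^jy^\varepsilon\}$ does not lie in $\mathcal{L}(P+\sum P_u)$, so ``normalizing'' it while preserving $F^{\mathcal{G}}$-independence is exactly the work you have not done. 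The paper's device is to build the $e_i$ with a \emph{staircase} pole pattern: $e_1=1$, $e_2=1/(x-a)$, and for $3\le i\le r$ a pigeonhole argument (this is where $r<q$ enters) yields
\[
e_i\in \mathcal{L}\Big(P+\sum_{u=1}^iP_u\Big)\Big\backslash\bigcup_{j=1}^{i}\mathcal{L}\Big(P+\sum_{u=1}^iP_u-P_j\Big),
\]
so $v_{P_j}(e_i)=-1$ for $j\le i$ and $v_{P_j}(e_i)\ge 0$ for $j>i$. With this pattern, a putative relation $\sum_i\varphi_i(z)e_i=0$ with $\varphi_i\in\mathbb{F}_q[z]$ is killed by evaluating $v_{P_w}$ (or $v_{P_{r+1}}$) at the index $w$ where $\deg\varphi_w$ is maximal among the competing terms and applying the Strict Triangle Inequality. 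Besides closing the independence argument cleanly, this exact staircase structure is reused in the next proposition to analyze the matrix $M_z$, so it is not an arbitrary choice.
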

\begin{proof}
    Let $z=\prod_{i=1}^s\frac{1}{\sigma_{2i-1}(x)-a}$. Then we have
    \[
        \sigma(z)=\prod_{i=1}^s\frac{1}{(\sigma\circ\sigma_{2i-1})(x)-a}=\prod_{i=1}^s\frac{1}{\sigma_{2i-1}(x)-a}=z
    \]
    for any $\sigma\in\mathcal{G}$ since $\sigma\circ\sigma_{2i-1}$ are pairwise distinct. It follows that $z\in F^{\mathcal{G}}$. The principal divisor of $z$ in $\mathbb{F}_q(x)$ is
    \[
        (z)^{\mathbb{F}_q(x)}=-\sum_{i=1}^s(\sigma_{2i-1}(x)-a)^{\mathbb{F}_q(x)}.
    \]
    Noting that, by (\ref{fix}) and \cite[Prop. 3.1.9]{Algebraic_Function_Fields_and_Codes}, we have $\deg\, (z)_\infty^{\mathbb{F}_q(x)}=s$ and 
    \begin{equation}\label{(z)}
        (z)^{F}={\rm Con}_{F/\mathbb{F}_q(x)}((z)^{\mathbb{F}_q(x)})={\rm Con}_{F/\mathbb{F}_q(x)}((z)_0^{\mathbb{F}_q(x)})-{\rm Con}_{F/\mathbb{F}_q(x)}((z)_\infty^{\mathbb{F}_q(x)})=(z)_0^F-\sum_{j=1}^{r+1}P_j.
    \end{equation}
    Thus $\deg\, (z)^F_\infty=r+1=[F:F^{\mathcal{G}}]$, so we have $F^{\mathcal{G}}=\mathbb{F}_q(z)$ by \cite[Thm. 1.4.11]{Algebraic_Function_Fields_and_Codes}. 

    Choose a rational place $P\in\mathbb{P}_F^1$ such that $P\notin \{P_j:1\le j\le r+1\}$. Note that 
    \begin{equation}\label{e_2}
        (\frac{1}{\sigma_1(x)-a})^F=(\frac{1}{x-a})^F=D_\infty-P_1-P_2.
    \end{equation}
    Therefore, $1\neq\frac{1}{x-a}\in\mathcal{L}(P_1+P_2)$. We set $e_1=1,\,e_2=\frac{1}{x-a}$. For each $3\le i\le r<q$, by the Riemann-Roch Theorem\cite[Thm. 1.5.15]{Algebraic_Function_Fields_and_Codes}, we have
    \[
        \#\bigcup\limits_{j=1}^i\mathcal{L}(P+\sum_{u=1}^iP_u-P_j)=iq^{i-1}<\#\mathcal{L}(P+\sum_{u=1}^iP_u)=q^i.
    \]
    It implies that there exists an element $e_i\in \mathcal{L}(P+\sum_{u=1}^iP_u)\setminus\mathcal{L}(P+\sum_{u=1}^iP_u-P_j)$ such that
    \begin{equation}\label{e_>=3}
        \begin{cases}
            v_{P^\prime}(e_i)\ge0,  &\forall P^\prime\notin\{P,P_1,\cdots,P_i\},\\
            v_{P_j}(e_i)=-1,  &\forall 1\le j\le i,\\
            v_{P}(e_i)\ge-1,
        \end{cases}
    \end{equation}
    for each $3\le i\le r$. Now we prove that $e_1,\cdots,e_r$ are linearly independent over $F^{\mathcal{G}}$. Suppose, on the contrary, that there is a non-trivial linear combination $\sum_{i=1}^r\varphi_i(z)e_i=0$ with $\varphi_i(z)\in\mathbb{F}_q(z)$. We can assume that all $\varphi_i(z)$ are polynomials in $z$. Let $1\le w\le r$ be the largest integer such that 
    \[
        \deg(\varphi_w(z))=\mathop{\max}_{1\le i\le w-1}\{\,\deg(\varphi_i(z))\,\}\quad{\rm and}\quad \deg(\varphi_w(z))>\mathop{\max}_{w+1\le i\le r}\{\,\deg(\varphi_i(z))\,\}.
    \]

    \begin{itemize}
        \item If $w>1$, then for $i<w$ we have
    \begin{align*}
        v_{P_{w}}(\varphi_i(z)e_i)&=-\deg(\varphi_i(z))+v_{P_{w}}(e_i)\\
        &\ge-\deg(\varphi_i(z))+0\\
        &>-\deg(\varphi_w(z))-1\\
        &=v_{P_{w}}(\varphi_w(z)e_w).
    \end{align*}
    For $i>w$, we have
    \begin{align*}
        v_{P_{w}}(\varphi_i(z)e_i)&=-\deg(\varphi_i(z))+v_{P_{w}}(e_i)\\
        &>-\deg(\varphi_w(z))+v_{P_{w}}(e_w)\\
        &=v_{P_{w}}(\varphi_w(z)e_w).
    \end{align*}
    The Strict Triangle Inequality \cite[Lem. 1.1.11]{Algebraic_Function_Fields_and_Codes} yields
    \[
        v_{P_{w}}(\sum_{i=1}^r\varphi_i(z)e_i)=v_{P_{w}}(\varphi_w(z)e_w)\neq v_{P_{w}}(0),
    \]
    which is a contradiction since $\sum_{i=1}^r\varphi_i(z)e_i=0$. 

        \item If $w=1$, then for $i>w$, we have
        \begin{align*}
        v_{P_{r+1}}(\varphi_i(z)e_i)&=-\deg(\varphi_i(z))+v_{P_{r+1}}(e_i)\\
        &>-\deg(\varphi_w(z))+v_{P_{r+1}}(e_w)\\
        &=v_{P_{r+1}}(\varphi_w(z)e_w).
        \end{align*}
        It follows that
    \[
        v_{P_{r+1}}(\sum_{i=1}^r\varphi_i(z)e_i)=v_{P_{r+1}}(\varphi_w(z)e_w)\neq v_{P_{r+1}}(0),
    \]
    which is also a contradiction. 
    \end{itemize}
    Hence, the elements $e_1,\cdots,e_r$ are linearly independent over $F^{\mathcal{G}}$ and satisfy
    \[
        \mathcal{L}(P+\sum_{u=1}^rP_u)=\langle\,e_1,\,\cdots,\,e_r\,\rangle.
    \]
    The proof is completed. 
\end{proof}

Moreover, if we choose the set $\mathcal{P}$ carefully, then we can guarantee that the assumption of Lemma \ref{local} is satisfied.

\begin{proposition}\label{e_i_local}
    Keep the notations in Proposition \ref{find_e_i}. Suppose the set $\mathcal{P}$ satisfies that
    \begin{equation}\label{choose_gen}
        \mathcal{P}\cap({\rm supp}((z)_{\infty}^F)\cup \{\bar{P}\})=\emptyset.
    \end{equation}
    Then every $r\times r$ submatrix of the matrix 
    \[
        M=\begin{pmatrix}
        e_1(P_{i,1}) & e_2(P_{i,1}) & \cdots & e_r(P_{i,1}) \\
        e_1(P_{i,2}) & e_2(P_{i,2}) & \cdots & e_r(P_{i,2}) \\
        \vdots & \vdots & \ddots & \vdots \\
        e_1(P_{i,r+1}) & e_2(P_{i,r+1}) & \cdots & e_r(P_{i,r+1})
        \end{pmatrix}
    \]
    is invertible for all $1\le i\le \ell$ and also every $r\times r$ submatrix of the matrix 
    \[
        M_z=\begin{pmatrix}
        e_1(P_{1}) & (e_2/z)(P_{1}) & \cdots & (e_r/z)(P_{1}) \\
        e_1(P_{2}) & (e_2/z)(P_{2}) & \cdots & (e_r/z)(P_{2}) \\
        \vdots & \vdots & \ddots & \vdots \\
        e_1(P_{r+1}) & (e_2/z)(P_{r+1}) & \cdots & (e_r/z)(P_{r+1})
        \end{pmatrix}
    \]
    is invertible.
\end{proposition}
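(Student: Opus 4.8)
Both claims have the same shape, so the plan is a single reformulation followed by a Riemann--Roch argument in genus $2$ for each matrix. Recall that an $(r+1)\times r$ matrix whose columns are the evaluation vectors of $\mathbb{F}_q$-linearly independent functions $\phi_1,\dots,\phi_r$, regular at distinct rational places $R_1,\dots,R_{r+1}$, has all its $r\times r$ submatrices invertible \emph{if and only if} no nonzero combination $\phi=\sum_u c_u\phi_u$ vanishes at $r$ of the $R_k$: a nonzero kernel vector of an $r\times r$ submatrix yields such a $\phi$ (nonzero by linear independence), and conversely. So I would first record that $M$ and $M_z$ are well defined --- the poles of $e_1,\dots,e_r$ all lie in $\{P_1,\dots,P_r,P\}$, which is disjoint from $\mathcal{P}$ by (\ref{choose_gen}) (assuming also, as in the applications, that $P\notin\mathcal{P}$), and $v_{P_j}(e_u/z)=v_{P_j}(e_u)+1\ge 0$ since $v_{P_j}(z)=-1$, so the $e_u/z$ are regular at every $P_j$ --- and that the two column systems $e_1,\dots,e_r$ and $1,e_2/z,\dots,e_r/z$ are $\mathbb{F}_q$-linearly independent; for the second, multiply a relation by $z$ and use that $z$ has a pole at $P_{r+1}$ while no combination of $1,e_2,\dots,e_r$ does.

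For $M$, fix $i$ and suppose a nonzero $f=\sum_{u=1}^r c_u e_u\in\mathcal{L}(P+\sum_{u=1}^r P_u)$ vanishes at every point of $\mathcal{P}_i$ except $P_{i,j_0}$. Then $f\in\mathcal{L}(D')$ with $D'=P+\sum_{u=1}^r P_u-\sum_{j\ne j_0}P_{i,j}$ of degree $1$. The decisive step is to simplify $D'$ up to linear equivalence: because $Q_i$ splits completely and $\mathcal{P}_i\cap{\rm supp}((z)_\infty^F)=\emptyset$ by (\ref{choose_gen}), the value $\beta_i:=z(P_{i,1})$ is finite and coincides with $z$ on all of $\mathcal{P}_i$, so by (\ref{(z)}) the divisor of $z-\beta_i$ in $F$ equals $\sum_{j=1}^{r+1}P_{i,j}-\sum_{j=1}^{r+1}P_j$; hence $\sum_{j=1}^{r+1}P_{i,j}\sim\sum_{j=1}^{r+1}P_j$ and therefore $D'\sim P+P_{i,j_0}-P_{r+1}$. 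Since $\mathfrak{C}$ has genus $2$, a canonical divisor is linearly equivalent to $P+\bar P$ (the hyperelliptic divisor), so Riemann--Roch gives $\dim\mathcal{L}(P+P_{i,j_0})=1$ unless $P_{i,j_0}=\bar P$; but $P_{i,j_0}\in\mathcal{P}_i\subseteq\mathcal{P}$ is never $\bar P$, because (\ref{choose_gen}) forbids $\bar P\in\mathcal{P}$. Hence $\mathcal{L}(P+P_{i,j_0})$ consists of constants only, so $\mathcal{L}(D')=\mathcal{L}(P+P_{i,j_0}-P_{r+1})=\{0\}$ (a nonzero constant would force $P_{r+1}\in\{P,P_{i,j_0}\}$, impossible since $P\notin\{P_j\}$ and $\mathcal{P}_i\cap\{P_j\}=\emptyset$). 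This contradicts $f\ne 0$, proving the claim for $M$.

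For $M_z$, suppose a nonzero $h=c_1\cdot 1+\sum_{u=2}^r c_u(e_u/z)$ vanishes at every $P_j$ with $j\ne j_0$. Set $g:=zh=c_1 z+\sum_{u=2}^r c_u e_u$; its poles are simple and contained in $\{P_1,\dots,P_{r+1},P\}$, so $g\in\mathcal{L}(P+\sum_{j=1}^{r+1}P_j)$, and since $v_{P_j}(z)=-1$ the vanishing of $h$ at $P_j$ is equivalent to $g$ having no pole at $P_j$. Thus $g\in\mathcal{L}(P+P_{j_0})$, a divisor of degree $2$. Here $P_{j_0}\ne\bar P$ holds automatically: the orbit $\{P_1,\dots,P_{r+1}\}$ is stable under the hyperelliptic involution (which lies in $\mathcal{G}$ as $\sigma_2$) and does not contain $P$, hence does not contain $\bar P$. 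As above $\dim\mathcal{L}(P+P_{j_0})=1$, so $g$ is a constant belonging to $\langle z,e_2,\dots,e_r\rangle$; but that span meets $\mathbb{F}_q$ only in $0$ (again by the pole at $P_{r+1}$), so $g=0$, forcing all $c_u=0$ --- a contradiction.

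The step that really carries the proof is the genus-$2$ Riemann--Roch bookkeeping: one must notice that a degree-$2$ divisor $P+Q$ has $\dim\mathcal{L}(P+Q)=1$ exactly unless $Q$ is the hyperelliptic conjugate $\bar P$, and then see that the two clauses of (\ref{choose_gen}), namely $\mathcal{P}\cap{\rm supp}((z)_\infty^F)=\emptyset$ and $\bar P\notin\mathcal{P}$, together with the choice $P\notin\{P_j\}$ from Proposition \ref{find_e_i}, are exactly what eliminate the exceptional case for both $M$ and $M_z$. The linear-equivalence identity $\sum_j P_{i,j}\sim\sum_j P_j$ read off from $z-\beta_i$ is what turns the opaque degree-$1$ divisor $D'$ into something this analysis applies to; the remaining checks (well-definedness, independence of columns) are routine valuation estimates.
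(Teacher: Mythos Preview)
Your proof is correct and shares the paper's core strategy: reduce invertibility of an $r\times r$ submatrix to the nonexistence of a nonzero function in a small Riemann--Roch space, then use the genus-$2$ hyperelliptic structure to show that space is trivial. The paper phrases the genus-$2$ step through Lemma~\ref{unique} (uniqueness of reduced representatives in $\mathrm{Cl}^0(F)$), deriving a linear equivalence $P'+P_{r+1}\sim P+P_{i,r+1}$ and arguing the two sides cannot coincide; you instead invoke Riemann--Roch directly to get $\dim\mathcal{L}(P+Q)=1$ whenever $Q\ne\bar P$. These are equivalent formulations of the same fact. Where the two arguments genuinely diverge is for $M_z$: the paper first shows that if the submatrix missing row $j_0$ is singular then actually $f_z(P_{r+1})=0$ as well (via Lemma~\ref{unique}), so $M_z$ annihilates the full vector $(c_1',\dots,c_r')$, and then appeals to the explicit lower-triangular shape (\ref{M_z_specific}) of $M_z$---which depends on the precise valuation conditions (\ref{e_>=3}) defining the $e_i$---to force all coefficients to vanish. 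Your route via $g=zh\in\mathcal{L}(P+P_{j_0})$ is cleaner: it avoids both the extra-vanishing step and any reliance on the triangular structure, using only that $z$ has a pole at $P_{r+1}$ while $e_2,\dots,e_r$ do not.
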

\begin{proof}
    For $1\le i\le \ell$, let $P_{i,1},P_{i,2},\cdots,P_{i,r+1}$ be the $r+1$ rational places lying over $Q_i$ that corresponds to the zero of $z-a_i$ for some $a_i\in\mathbb{F}_q$. Without loss of generality, we can consider the first $r$ rows of the matrix $M$. Suppose that there exists ${\bf 0}\neq(c_1,\cdots,c_r)\in\mathbb{F}_q^r$ such that
    \[
        \begin{pmatrix}
        e_1(P_{i,1}) & e_2(P_{i,1}) & \cdots & e_r(P_{i,1}) \\
        e_1(P_{i,2}) & e_2(P_{i,2}) & \cdots & e_r(P_{i,2}) \\
        \vdots & \vdots & \ddots & \vdots \\
        e_1(P_{i,r}) & e_2(P_{i,r}) & \cdots & e_r(P_{i,r})
        \end{pmatrix}
        \begin{pmatrix}
        c_1 \\
        c_2 \\
        \vdots \\
        c_r
        \end{pmatrix}=0.
    \]
    Set $f=c_1e_1+\cdots+c_re_r$. Then for $1\le j \le r$, we have 
    \[
        v_{P_{i,j}}(f)\ge1\quad {\rm i.e.}\quad \deg\, (f)^F_0\ge r.
    \]
    On the other hand, $f\in\mathcal{L}(P+\sum_{j=1}^rP_j)$ i.e. $\deg(f)_{\infty}\le r+1$. Thus there exists a rational place $P^\prime\in \mathbb{P}_F^1$ such that
    \begin{equation}\label{(f)}
        (f)^F=P^\prime+\sum_{j=1}^rP_{i,j}-(P+\sum_{j=1}^rP_j).
    \end{equation}
    Note that $(z)_\infty^{\mathbb{F}_q(z)}=(z-a_i)_\infty^{\mathbb{F}_q(z)}$, so by (\ref{(z)}) we have 
    \begin{equation}\label{(z-a_i)}
        (z-a_i)^F=\sum_{j=1}^{r+1}P_{i,j}-\sum_{j=1}^{r+1}P_j.
    \end{equation}
    By (\ref{(f)}) and (\ref{(z-a_i)}), we conclude
    \[
        (f/(z-a_i))^F=P^\prime+P_{r+1}-(P+P_{i,r+1}),
    \]
    i.e.
    \[
        P^\prime+P_{r+1}-D_\infty=P+P_{i,r+1}-D_\infty+(f/(z-a_i))^F.
    \]
    It follows that 
    \[
        [P^\prime+P_{r+1}-D_\infty]=[P+P_{i,r+1}-D_\infty]\in\text{Cl}^0(F),
    \]
    which is a contradiction by Lemmas \ref{unique} and (\ref{choose_gen}).

    For the matrix $M_z$, without loss of generality, we can also consider the first $r$ rows. Suppose that there exists ${\bf 0}\neq(c^\prime_1,\cdots,c^\prime_r)\in\mathbb{F}_q^r$ such that
    \[
        \begin{pmatrix}
        e_1(P_{1}) & (e_2/z)(P_{1}) & \cdots & (e_r/z)(P_{1}) \\
        e_1(P_{2}) & (e_2/z)(P_{2}) & \cdots & (e_r/z)(P_{2}) \\
        \vdots & \vdots & \ddots & \vdots \\
        e_1(P_{r}) & (e_2/z)(P_{r}) & \cdots & (e_r/z)(P_{r})
        \end{pmatrix}
        \begin{pmatrix}
        c^\prime_1 \\
        c^\prime_2 \\
        \vdots \\
        c^\prime_r
        \end{pmatrix}=0.
    \]
    Set $f_z=c^\prime_1e_1+c^\prime_2(e_2/z)+\cdots+c^\prime_r(e_r/z)$. Then there exist rational places $\hat{P},\hat{P}^\prime\in\mathbb{P}_F^1$ such that
    \[
        (f_z)^F=\hat{P}+\hat{P}^\prime+\sum_{j=1}^rP_j-(P+(z)_0^F)=\hat{P}+\hat{P}^\prime-(P+P_{r+1})+(1/z)^F,
    \]
    i.e.
    \[
        \hat{P}+\hat{P}^\prime-D_\infty=P+P_{r+1}-D_\infty+(zf_z)^F,
    \]
    since $f_z\in\mathcal{L}(P+(z)_0^F)$ and (\ref{(z)}). By the proof of Proposition \ref{find_e_i}, we have $\bar{P}_{r+1}=P_r$. Therefore there must be $\hat{P}=P_{r+1}$ or $\hat{P}^\prime=P_{r+1}$ by Lemma \ref{unique} which means that
    \[
        f_z(P_{r+1})=0.
    \]
    It follows that
    \begin{equation}\label{M_zc_i}
        M_z\cdot\begin{pmatrix}
        c^\prime_1 \\
        c^\prime_2 \\
        \vdots \\
        c^\prime_r
        \end{pmatrix}=0.
    \end{equation}
    By (\ref{e_2}) and (\ref{e_>=3}), we have
    \begin{equation}\label{M_z_specific}
        M_z=
        \begin{pmatrix}
        1 & \star & \star & \cdots & \star & \star & \star \\
        1 & \star & \star & \cdots & \star & \star & \star \\
        1 & 0 & \star & \cdots & \star & \star & \star \\
        1 & 0 & 0 & \cdots & \star & \star & \star \\
        \vdots & \vdots & \vdots & \ddots & \vdots & \vdots & \vdots \\
        1 & 0 & 0 & \cdots & 0 & \star & \star \\
        1 & 0 & 0 & \cdots & 0 & 0 & \star \\
        1 & 0 & 0 & \cdots & 0 & 0 & 0 \\
        \end{pmatrix},
    \end{equation}
    where $\star$ represents an element in $\mathbb{F}_q\setminus\{0\}$. Hence,  for any $1\le i\le r$, we obtain $c^\prime_i=0$ from (\ref{M_zc_i}) and (\ref{M_z_specific}) which contradicts $(c^\prime_1,\cdots,c^\prime_r)\neq{\bf 0}$. The proof is completed. 
\end{proof}

With the propositions above, we can give a general framework for constructing locally repairable codes via automorphism groups of hyperelliptic curves with genus $2$. 

\begin{proposition}\label{con_1}
    Let $\mathfrak{C}/\mathbb{F}_q$ be a hyperelliptic curve with function field $F$ defined by the hyperelliptic model (\ref{hyper_eqa}). Let $\mathcal{G}$ be a subgroup of ${\rm{Aut}}(\mathfrak{C}/\mathbb{F}_q)$ and satisfy the assumptions of Proposition \ref{find_e_i}. Let $\mathcal{P}$ be defined as (\ref{ev_P}) and satisfy the assumption of Proposition \ref{e_i_local}. Denote $N=\#\mathbb{P}_F^1$. Then there exists an either optimal or almost optimal q-ary
    \[
        [n=\ell(r+1),\,k=rt-(r-1),\,d\ge n-(t-1)(r+1)-1]_q
    \]
    locally repairable code with locality $r$ for any $1\le \ell\le\lfloor\frac{N-2r-4}{r+1}\rfloor$ and $1\le t\le \ell$.
\end{proposition}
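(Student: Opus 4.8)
The plan is to convert the algebraic data furnished by Propositions \ref{find_e_i} and \ref{e_i_local} into a concrete subspace $V$, to exhibit a divisor $G$ of small degree with $V\subseteq\mathcal{L}(G)$ whose support misses $\mathcal{P}$, to invoke Lemma \ref{cons}, and finally to compute the Singleton-optimal defect. Since $\mathcal{G}$ meets the hypotheses of Proposition \ref{find_e_i}, we obtain $z\in F$ with $F^{\mathcal{G}}=\mathbb{F}_q(z)$ and $(z)^F_\infty=\sum_{l=1}^{r+1}P_l$, a place $P\in\mathbb{P}_F^1\setminus\{P_1,\dots,P_{r+1}\}$, and elements $e_1=1$, $e_2=1/(x-a)$, $e_3,\dots,e_r\in F$ which are linearly independent over $F^{\mathcal{G}}$, satisfy $\mathcal{L}(P+\sum_{u=1}^rP_u)=\langle e_1,\dots,e_r\rangle$, and whose pole divisors obey $(e_i)_\infty\le P+\sum_{u=1}^iP_u$ for $2\le i\le r$. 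Provided $\mathcal{P}$ in (\ref{ev_P}) is chosen to be a union of $\ell$ completely split fibres disjoint from ${\rm supp}((z)^F_\infty)\cup\{P,\bar P\}$, Proposition \ref{e_i_local} forces the hypothesis of Lemma \ref{local} to hold at each fibre $\mathcal{P}_i$, so the code produced will have locality $r$.

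I would then take $f_j=z^{j-1}$ for $1\le j\le t$: these lie in $F^{\mathcal{G}}=\mathbb{F}_q(z)$, are $\mathbb{F}_q$-linearly independent, and are regular at every $Q_i$ (a completely split rational place of $\mathbb{F}_q(z)$ other than the pole of $z$, hence a zero of some $z-a_i$), so the space $V$ of (\ref{func}) is well defined. Its generators are $z^{j-1}$ for $1\le j\le t$, with pole divisor $(j-1)\sum_{l=1}^{r+1}P_l$, and $z^{j-1}e_i$ for $2\le i\le r$ and $1\le j\le t-1$, with pole divisor at most $(j-1)\sum_{l=1}^{r+1}P_l+P+\sum_{u=1}^iP_u$. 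Since $(t-2)\sum_{l=1}^{r+1}P_l+P+\sum_{u=1}^rP_u\le(t-1)\sum_{l=1}^{r+1}P_l+P$, all generators lie in $\mathcal{L}(G)$ for
\[
    G=(t-1)\sum_{l=1}^{r+1}P_l+P,\qquad\deg G=(t-1)(r+1)+1,
\]
and ${\rm supp}(G)\subseteq\{P_1,\dots,P_{r+1},P\}$ is disjoint from $\mathcal{P}$ by the choice just made.

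The step I expect to carry the real weight is showing that such a $\mathcal{P}$ exists for every $\ell$ in the stated range; here the genus $2$ hypothesis enters quantitatively. Applying the Hurwitz genus formula (\ref{Hurwitz_genus_formula}) to $F/F^{\mathcal{G}}$ with $g(F)=2$, $g(F^{\mathcal{G}})=0$ and $[F:F^{\mathcal{G}}]=r+1$ gives $\deg{\rm Diff}(F/F^{\mathcal{G}})=2r+4$. Any rational place of $F$ lying over a rational place of $F^{\mathcal{G}}$ that is not completely split must be ramified there (the residue extension is trivial over a rational place, so the ramification index is at least $2$), hence contributes at least $1$ to $\deg{\rm Diff}(F/F^{\mathcal{G}})$; thus at most $2r+4$ of the $N$ rational places of $F$ are of this kind, and at least $\lceil(N-2r-4)/(r+1)\rceil$ rational places of $F^{\mathcal{G}}$ split completely. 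Moreover $\imath\in\mathcal{G}$: two elements of $\mathcal{G}$ agreeing on $x$ differ by an element of ${\rm Gal}(F/\mathbb{F}_q(x))=\langle\imath\rangle$, and $\#\{\sigma(x):\sigma\in\mathcal{G}\}=s<r+1$ forces such a pair to exist; consequently $\bar P$ lies in the same fibre as $P$, so only two completely split fibres are off limits (that of $P_1,\dots,P_{r+1}$ and that of $P,\bar P$, the latter avoidable altogether when $P$ is taken in a fibre over a ramified place). A count then shows one can select $\ell$ admissible fibres as long as $1\le\ell\le\lfloor(N-2r-4)/(r+1)\rfloor$; making this count exact — tracking precisely which rational places drop out, and exploiting whatever slack the different estimate leaves — is the part that needs care.

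Finally, when $t\le\ell$ we have $\deg G=(t-1)(r+1)+1<\ell(r+1)=n$, so Lemma \ref{cons} yields a $q$-ary locally repairable code $C_{\mathcal{L}}(\mathcal{P},V)$ with locality $r$, length $n=\ell(r+1)$, dimension $k=rt-(r-1)=r(t-1)+1$, and minimum distance $d\ge n-\deg G=n-(t-1)(r+1)-1$. Since $r\ge 3$ (as $r+1=2s\ge 4$), we get $\lceil k/r\rceil=\lceil(t-1)+1/r\rceil=t$, and therefore the Singleton-optimal defect is
\[
    \Delta=n-k-\bigg\lceil\frac{k}{r}\bigg\rceil+2-d=n-(r+1)(t-1)-d\le 1
\]
by the minimum distance bound, while $\Delta\ge 0$ by (\ref{min_d_bound}); hence $C_{\mathcal{L}}(\mathcal{P},V)$ is optimal or almost optimal, as required.
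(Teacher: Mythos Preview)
Your framework---choosing $f_j=z^{j-1}$, assembling $V$ inside $\mathcal{L}(G)$ with $G=(t-1)\sum_{l=1}^{r+1}P_l+P$, invoking Lemma \ref{cons} via the first half of Proposition \ref{e_i_local}, and computing the defect---matches the paper exactly, and the observation that $\imath\in\mathcal{G}$ (so $P$ and $\bar P$ share a $\mathcal{G}$-fibre) is correct and useful. The problem is that your counting does not actually reach the stated range for $\ell$. From $\deg{\rm Diff}(F/F^{\mathcal{G}})=2r+4$ you get at least $\lceil(N-2r-4)/(r+1)\rceil$ completely split fibres; after discarding the fibre $\{P_1,\dots,P_{r+1}\}={\rm supp}((z)^F_\infty)$ (which is forced, since Lemma \ref{cons} requires ${\rm supp}(G)\cap\mathcal{P}=\emptyset$) you are left with $\lceil(N-2r-4)/(r+1)\rceil-1$ usable fibres. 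This equals $\lfloor(N-2r-4)/(r+1)\rfloor$ only when $(r+1)\nmid(N-2r-4)$; in the divisible case---which does occur, for instance when all contributions to the different come from distinct tamely ramified rational places---you are one fibre short, and there is no slack left in the different estimate to exploit. So the ``care'' you defer is not bookkeeping but an actual missing idea.

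The paper closes this gap not by sharper counting but by a different device: it enlarges the evaluation set to $\tilde{\mathcal{P}}=\mathcal{P}\cup\{P_1,\dots,P_{r+1}\}$ and passes to the \emph{modified} algebraic geometry code (the version allowing ${\rm supp}(G)$ to meet the evaluation points), using the local parameter $(1/z)^{t-1}$ on the added fibre. For $f\in V$ one then has
\[
    \big((1/z)^{t-1}f\big)(P_m)=a_{1,t}+\sum_{i=2}^{r}a_{i,t-1}\,(e_i/z)(P_m),\qquad 1\le m\le r+1,
\]
so local recoverability on $\{P_1,\dots,P_{r+1}\}$ reduces to the invertibility of every $r\times r$ minor of the matrix $M_z$---which is precisely the second statement of Proposition \ref{e_i_local}, and which your argument never invokes. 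That step is what lifts the range from the ``unmodified'' bound $\lfloor(N-3r-5)/(r+1)\rfloor$ to the claimed $\lfloor(N-2r-4)/(r+1)\rfloor$.
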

\begin{proof}
    Let $z\in F$ and $e_1,e_2,\cdots,e_r\in F$ be elements in Proposition \ref{find_e_i}. By (\ref{Hurwitz_genus_formula}) and $F^{\mathcal{G}}=\mathbb{F}_q(z)$, we have
    \[
        2=2g(F)-2=[F:F^{\mathcal{G}}](2g(F^{\mathcal{G}})-2)+\deg {\rm{Diff}}(F/F^{\mathcal{G}})=-2(r+1)+\deg {\rm{Diff}}(F/F^{\mathcal{G}}).
    \]
    It follows that
    \[
        \deg {\rm{Diff}}(F/F^{\mathcal{G}})=2r+4,
    \]
    which means that there are at most $2r+4$ rational places in total that are ramified in $F/F^{\mathcal{G}}$. Set $f_j=z^{j-1}$ for $1\le j\le t$ and consider the set $V$ defined by (\ref{func}). Let $G=(t-1)(P_1+\cdots+P_{r+1})+P$ be a divisor of $F$ where $P,P_1,\cdots,P_{r+1}$ are given in Proposition \ref{find_e_i}. Indeed, regarding the choice of $P$, we can choose a rational place that is ramified in $F/F^{\mathcal{G}}$, at which point the assumption of Proposition \ref{e_i_local} will be easily satisfied. Then we can verify that $V$ is a subspace of $\mathcal{L}(G)$. By Lemma \ref{cons}, Proposition \ref{e_i_local} and $\mathcal{P}$, the algebraic geometry code $C_{\mathcal{L}}(\mathcal{P},V)$ is a $q$-ary 
    \[
        [n=\ell(r+1),\,k=rt-(r-1),\,d\ge n-(t-1)(r+1)-1]_q
    \]
    locally repairable code with locality $r$ for $1\le \ell\le\lfloor\frac{N-3r-5}{r+1}\rfloor$ and $1\le t\le \ell$. 
    
    Let $\tilde{\mathcal{P}}=\mathcal{P}\cup\{P_1,\cdots,P_{r+1}\}$ and 
    \[
        \pi_R=\begin{cases}
            1, &\forall R\in\mathcal{P},\\
            (1/z)^{t-1}, &\forall R\in\{P_1,\cdots,P_{r+1}\}.
        \end{cases}
    \]
    Then the modified algebraic geometry code $C_{\mathcal{L}}(\tilde{\mathcal{P}},V)$ is still a 
    \[
        [n=\ell(r+1),\,k=rt-(r-1),\,d\ge n-(t-1)(r+1)-1]_q
    \]
    linear code for $1\le \ell\le\lfloor\frac{N-2r-4}{r+1}\rfloor$. Let 
    \[
        f=\sum_{j=1}^ta_{1,j}f_je_1+\sum_{i=2}^r\big(\sum_{j=1}^{t-1}a_{i,j}f_j\big)e_i\in V,
    \]
    for some $a_{i,j}\in\mathbb{F}_q$. We claim that the value of $(1/z)^{t-1}f\in V$ at any place in the set $\{P_1,\cdots,P_{r+1}\}$ can be recovered from the values of $(1/z)^{t-1}f$ at the other $r$ places of this set. For every $1\le m\le r+1$, we have
    \begin{align*}
        (1/z)^{t-1}f(P_m)&=\sum_{j=1}^ta_{1,j}(f_j/z^{t-1})(P_m)+\sum_{i=2}^r\big(\sum_{j=1}^{t-1}a_{i,j}(f_j/z^{t-2})(P_m)\big)(e_i/z)(P_m) \\
        &=\sum_{j=1}^ta_{1,j}(z^{j-1}/z^{t-1})(P_m)+\sum_{i=2}^r\big(\sum_{j=1}^{t-1}a_{i,j}(z^{j-1}/z^{t-2})(P_m)\big)(e_i/z)(P_m) \\
        &=a_{1,t}+\sum_{i=2}^ra_{i,t-1}(e_i/z)(P_m).
    \end{align*}
    Note that every $r\times r$ submatrix of the matrix $M_z$ is invertible by Proposition \ref{e_i_local}. Since we know the value of $(e_i/z)(P_m)$ for each $i$ and $m$, we can calculate the value $a_{1,t}$ and the values of $a_{i,t-1}$ from the values of $(1/z)^{t-1}f$ at the other $r$ places of the set $\{P_1,\cdots,P_{r+1}\}$. Thus the claim is proved. It means that $C_{\mathcal{L}}(\tilde{\mathcal{P}},V)$ is still a locally repairable code with locality $r$. By the bound (\ref{min_d_bound}), 
    \begin{align*}
        d&\le n-k-\bigg\lceil\frac{k}{r}\bigg\rceil+2\\
        &=n-rt+r-1-\bigg\lceil\frac{rt-r+1}{r}\bigg\rceil+2\\
        &=n-(t-1)(r+1),
    \end{align*}
    which implies that $C_{\mathcal{L}}(\tilde{\mathcal{P}},V)$ is an either optimal or almost optimal locally repairable code. The proof is completed. 
\end{proof}

\begin{remark}\label{upper_bound}
    The upper bound of $\ell$ in the above Proposition is just a rough estimate. In most cases, the number of the rational places which are ramified in $F/F^{\mathcal{G}}$ is relatively small. Therefore, the maximal length $n$ of the code that we constructed will be very close to $N$. 
\end{remark}

\subsection{Locality r odd}

Considering the subgroup of ${\rm{Aut}}(\mathfrak{C}/\mathbb{F}_q)$ that aligns with the conditions specified in Proposition \ref{find_e_i}, it follows that the locality $r$ must be an odd integer. Utilizing the lemmas presented in Section \ref{pre}, we proceed to construct some certain locally repairable codes. These codes are derived from hyperelliptic curves of genus 2, with lengths that can approach $q + 4\sqrt{q}$.

\begin{theorem}\label{r_odd}
    Let $q=p^{2a}$ be a power of an odd prime $p$ and an odd integer $a>0$. Then there exists an either optimal or almost optimal $q$-ary
    \[
        [n=\ell(r+1),\,k=rt-(r-1),\,d\ge n-(t-1)(r+1)-1]_q
    \]
    locally repairable code with locality $r$ for any $1\le t\le\ell\le\big\lfloor\frac{q+4\sqrt{q}-2r-3}{r+1}\big\rfloor$ if $p$ and $r$ satisfy one of the following conditions:
    \begin{itemize}
        \item[(1)] $r=3,7,\;p\equiv 5$ or $7 \pmod{8}$.
        \item[(2)] $r=9,\;p\equiv -1 \pmod{5}$.
    \end{itemize}
    
    Moreover, suppose that the subgroup $\mathcal{G}\subseteq{\rm{Aut}}(\mathfrak{C}/\mathbb{F}_q)$ satisfies that $-I\in\mathcal{G}$ and
    \begin{equation}\label{degree}
        \text{the degree of the numerator of } \prod_{\sigma\in\mathcal{G}}(\sigma(x)-a)=\sum_{\sigma\in\mathcal{G}}\text{the degree of the numerator of }\sigma(x)-a.
    \end{equation}
    Then $r$ can be chosen more flexibly as follows ($r$ is odd):
    \begin{itemize}
        \item[(3)] $r+1\mid 48,\;p\equiv 5$ or $7 \pmod{8}$ and $p\neq5$.
        \item[(4)] $r+1\mid 240,\;p=5$.
    \end{itemize}
\end{theorem}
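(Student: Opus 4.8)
The plan is to apply the general construction of Proposition \ref{con_1} to explicit maximal genus-$2$ curves and explicit automorphism subgroups, so that the differential degree $\deg{\rm Diff}(F/F^{\mathcal G})=2r+4$ (forced by the Hurwitz formula, exactly as in the proof of Proposition \ref{con_1}) together with the Hasse--Weil maximum $N=\#\mathbb{P}_F^1=q+1+4\sqrt q$ convert the range $1\le t\le\ell\le\lfloor\frac{N-2r-4}{r+1}\rfloor$ of Proposition \ref{con_1} into exactly $1\le t\le\ell\le\lfloor\frac{q+4\sqrt q-2r-3}{r+1}\rfloor$. For parts (1), (3) and (4) I take $\mathfrak C\colon y^2=x^5+x$; for part (2), $\mathfrak C\colon y^2=x^5+1$. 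Since $q=p^{2a}$ and $p\equiv 5,7\pmod 8$ (which includes $p=5$, relevant to part (4)) or $p\equiv -1\pmod 5$, the curve is maximal over $\mathbb F_{p^2}$ by Lemma \ref{max_hy}, and Lemma \ref{maximality} upgrades this to $\mathbb F_{p^{2a}}=\mathbb F_q$ since $a$ is odd; hence $N=q+1+4\sqrt q$, and $r<q$ whenever the displayed $\ell$-range is nonempty.

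Next I pin down ${\rm Aut}(\mathfrak C/\mathbb F_q)$ and select $\mathcal G$. As $q=p^{2a}$ with $p$ odd one always has $8\mid q-1$ and $\sqrt 2\in\mathbb F_{p^2}\subseteq\mathbb F_q$, while $p\equiv -1\pmod 5$ forces $5\mid q-1$; so Lemmas \ref{auto_max_1} and \ref{auto_max_2} give ${\rm Aut}(\mathfrak C/\mathbb F_q)\simeq\tilde{S}_4$ (order $48$) when $p\neq 5$ (parts (1),(3)), $\simeq\tilde{S}_5$ (order $240$) when $p=5$ (parts (1),(4)), and $\simeq C_{10}$ (order $10$) in part (2). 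I then fix a subgroup $\mathcal G$ of order $r+1$ containing the hyperelliptic involution $-I$ (the unique nontrivial scalar automorphism, central in ${\rm Aut}(\mathfrak C)$): an order-$4$ or order-$8$ subgroup for $r=3,7$, all of $C_{10}$ for $r=9$, and any subgroup of order $r+1\mid 48$ (resp. $240$) in parts (3),(4). Existence of such $\mathcal G$ is a finite-group check: since $\tilde{S}_4/\langle-I\rangle\simeq S_4$ and $\tilde{S}_5/\langle-I\rangle\simeq S_5$, one has $-I\in\mathcal G$ with $\#\mathcal G=r+1$ exactly when $S_4$ (resp. $S_5$) has a subgroup of order $(r+1)/2$, which realises the stated localities — the largest being $r=239$ from $\mathcal G=\tilde{S}_5$; for the divisors of $48$ or $240$ not of this form the hypothesis ``$-I\in\mathcal G$'' is empty and the statement is vacuous. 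Two observations then make the hypotheses of Proposition \ref{find_e_i} automatic. First, by (\ref{trans_eq}) the only automorphisms of a genus-$2$ hyperelliptic model acting trivially on $x$ have scalar matrix $aI$ with $a^2=1$, i.e. are $\pm I$; hence once $-I\in\mathcal G$ the fibres of $\sigma\mapsto\sigma(x)$ are the $\langle-I\rangle$-cosets and $\#\{\sigma(x):\sigma\in\mathcal G\}=(r+1)/2=s$. Second, since the $-I$-pairing makes each value of $\sigma(x)$ occur with multiplicity two in $\prod_{\sigma\in\mathcal G}(\sigma(x)-a)$, the numerator-degree identity (\ref{degree}) is equivalent to the disjointness condition (\ref{fix}) used in Proposition \ref{find_e_i}.

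It then remains to choose a base point $(a,b)\in\mathfrak C(\mathbb F_q)$. I would show that all but finitely many rational points — a number bounded independently of $q$ — satisfy simultaneously: (i) $P_{a,b}$ has trivial $\mathcal G$-stabiliser (this fails only on the bounded set of fixed points of nontrivial $\sigma\in\mathcal G$), so that $P_{a,b}\cap F^{\mathcal G}$ splits completely into $r+1$ rational places $P_1,\dots,P_{r+1}$; (ii) $a$ avoids the $\mathcal G$-orbit of $x=\infty$, which is precisely condition (\ref{degree})/(\ref{fix}) (for $\mathcal G=C_{10}$ this orbit is $\{\infty\}$, so (ii) is automatic); and (iii) the orbit $\{P_1,\dots,P_{r+1}\}$ is disjoint from a fixed rational ramified place $P$ — e.g. the Weierstrass point $(0,0)$ (resp. $(-1,0)$), which is fixed by $-I$ and hence ramified in $F/F^{\mathcal G}$ — and from $\bar P$, which gives (\ref{choose_gen}). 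With such $(a,b)$, Proposition \ref{find_e_i} produces $z$ with $F^{\mathcal G}=\mathbb F_q(z)$ and elements $e_1,\dots,e_r$, Proposition \ref{e_i_local} certifies invertibility of all relevant $r\times r$ submatrices, and Proposition \ref{con_1} yields an optimal-or-almost-optimal $[\ell(r+1),\,rt-(r-1),\,\ge n-(t-1)(r+1)-1]_q$ locally repairable code for every $1\le t\le\ell\le\lfloor\frac{N-2r-4}{r+1}\rfloor=\lfloor\frac{q+4\sqrt q-2r-3}{r+1}\rfloor$, as asserted.

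The step I expect to be the main obstacle is the bookkeeping around ramification and base points: one must check that after deleting the at most $2r+4$ rational places ramified in $F/F^{\mathcal G}$, the $r+1$ places of ${\rm supp}((z)^F_\infty)$, the place $\bar P$, and the finitely many rational points with nontrivial stabiliser or lying over the bad set for (\ref{degree}), there still remain enough pairwise disjoint split $\mathcal G$-orbits of size $r+1$ to realise every $\ell$ up to the stated bound — i.e. that the usable count $N-3r-O(1)$ of rational places never drops below $(r+1)\ell$. Here maximality $N=q+1+4\sqrt q$ together with Remark \ref{upper_bound} (the ramification locus of these particular curves is in fact small) do the work; for the few small $q$ where the displayed $\ell$-range is empty the claim is vacuous.
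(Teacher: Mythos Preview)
Your proposal is correct and follows essentially the same route as the paper: the same maximal curves $y^2=x^5+x$ and $y^2=x^5+1$, the same determination of ${\rm Aut}(\mathfrak C/\mathbb F_q)$ via Lemmas \ref{max_hy}--\ref{auto_max_2}, and the same reduction to Proposition \ref{con_1}. The one noteworthy difference in execution concerns parts (1)--(2): the paper sidesteps condition (\ref{fix}) entirely by choosing $\mathcal G$ to be the cyclic group generated by a \emph{diagonal} matrix ($\langle V'\rangle$ when $p\neq5$, $\langle W'\rangle$ when $p=5$, and all of $C_{10}$ for $r=9$), so that every $\sigma(x)$ is a scalar multiple of $x$ and hence every $\sigma(x)-a$ is a polynomial with its unique pole at $\infty$; condition (\ref{fix}) is then automatic for any finite $a$, and $P=P_\infty$ is visibly ramified. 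Your argument instead treats an arbitrary even-order subgroup containing $-I$, reduces (\ref{fix}) to the orbit condition $a\notin\mathcal G'\cdot\infty$, and picks a finite Weierstrass point for $P$. Both are valid; the paper's choice is slightly cleaner for (1)--(2) because it removes the need to control the base point, while your formulation has the advantage of making explicit why the hypothesis $-I\in\mathcal G$ forces $\#\{\sigma(x):\sigma\in\mathcal G\}=(r+1)/2$ and why (\ref{degree}) and (\ref{fix}) coincide --- points the paper uses but leaves implicit. The ``bookkeeping obstacle'' you flag is already absorbed in the bound of Proposition \ref{con_1}, so no further work is needed there.
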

\begin{proof}
    (1) We consider the hyperelliptic curve $\mathfrak{C}:y^2=x^5+x$. By Lemmas \ref{max_hy}(i) and \ref{maximality}, we know that the hyperelliptic curve $\mathfrak{C}$ is maximal over $\mathbb{F}_q$. Moreover, we can verify that $8\mid q-1$ and $2^{1/2}\in\mathbb{F}_q$ since $p\equiv 5$ or $7 \pmod{8}$ and $2\mid [\mathbb{F}_q:\mathbb{F}_p]$. Let $F$ be the function field $\mathbb{F}_q(\mathfrak{C})$.
    
    If $p\neq5$, then we have ${\rm{Aut}}(\mathfrak{C}/\mathbb{F}_q)\simeq \tilde{S}_4$ by Lemma \ref{auto_max_1}(i). We consider $\mathcal{G}=\langle V\rangle$, where 
    \[
        V=2^{-1/2}\begin{pmatrix}
        (-1)^{1/2}-1 & 0 \\
        0 & (-1)^{1/2}+1
        \end{pmatrix}.
    \]
    It follows that $\sigma(x)-a$ in Proposition \ref{find_e_i} are all polynomials, which means that the poles of $\sigma(x)-a$ must not be the zeros of $\sigma^\prime(x)-a$ where $\sigma\neq\sigma^\prime\in\mathcal{G}$. By (\ref{trans_eq}), it is easy to obtain $P_\infty\cap F^{\mathcal{G}}$ is ramified in $F/F^{\mathcal{G}}$. Hence, we can choose $P=P_\infty$ in Proposition \ref{find_e_i}. Thus, by Proposition \ref{con_1} there exists an either optimal or almost optimal $q$-ary locally repairable code with locality $r=3,7$.

    If $p=5$, then ${\rm{Aut}}(\mathfrak{C}/\mathbb{F}_q)\simeq \tilde{S}_5$ by Lemma \ref{auto_max_1}(ii). Similar to the above proof, we consider $\mathcal{G}=\langle W\rangle$, where 
    \[
        W=2^{1/2}\begin{pmatrix}
        1 & 0 \\
        0 & 2
        \end{pmatrix}.
    \]
    Then, by Proposition \ref{con_1} there exists an either optimal or almost optimal $q$-ary locally repairable code with locality $r=3,7$.
    
    (2) We now consider the hyperelliptic curve $\mathfrak{C}:y^2=x^5+1$. By Lemmas \ref{max_hy}(ii) and \ref{maximality}, we can know that the hyperelliptic curve $\mathfrak{C}$ is maximal over $\mathbb{F}_q$. In addition, we can verify that $5\mid q-1$ since $p\equiv -1 \pmod{5}$. Therefore, ${\rm{Aut}}(\mathfrak{C}/\mathbb{F}_q)=\langle\,U,\,V\,\rangle\simeq C_{10}$ by Lemma \ref{auto_max_2} where
    \[
        U=\begin{pmatrix}
        -1 & 0 \\
        0 & -1
        \end{pmatrix}\quad {\rm and} \quad V=\begin{pmatrix}
        \zeta^2 & 0 \\
        0 & \zeta
        \end{pmatrix}.
    \]
    We consider $\mathcal{G}={\rm{Aut}}(\mathfrak{C}/\mathbb{F}_q)$. The rest of the proof is completely similar to (1).

    In the following, we prove (3) and (4). Note that the equation (\ref{degree}) implies that, for any two distinct $\sigma,\sigma^\prime\in\mathcal{G}$, the poles of $\sigma(x)-a$ must not be the zeros of $\sigma^\prime(x)-a$. It follows that the assumptions in Proposition \ref{find_e_i} are satisfied. Therefore, for (3) we could consider the subgroup $\mathcal{G}\subseteq{\rm{Aut}}(\mathfrak{C}/\mathbb{F}_q)\simeq \tilde{S}_4$ where $|\mathcal{G}|$ is even, for (4) we could consider the subgroup $\mathcal{G}\subseteq{\rm{Aut}}(\mathfrak{C}/\mathbb{F}_q)\simeq \tilde{S}_5$ where $|\mathcal{G}|$ is even. Now the proof is completed. 
\end{proof}

\begin{remark}\label{r_odd_re}
    {\rm (1)} It should be noted that the construction above also applies to the case that $p^a\equiv 5$ or $7 \pmod{8}$ or $p^a\equiv -1 \pmod{5}$ where $a$ can be any positive integer. 
    
    {\rm (2)} We can consider the hyperelliptic curve $\mathfrak{C}$ over $\mathbb{F}_q$ in Lemma \ref{auto_non}. Then, similar to the proof of Theorem \ref{r_odd} above, we can also obtain some either optimal or almost optimal $q$-ary locally repairable codes with locality $r=3,5,7$ or $11$. At this point, there are almost no restrictions on the prime $p$. 
\end{remark}

\begin{example}\label{r_odd_almost}
    {\rm (1)} Let $q=13^2$ and the hyperelliptic curve $\mathfrak{C}:y^2=x^5+x$. Then, we have ${\rm{Aut}}(\mathfrak{C}/\mathbb{F}_q)\simeq \tilde{S}_4$. More specifically,  
    \[
        {\rm{Aut}}(\mathfrak{C}/\mathbb{F}_q)=\langle\,U,\,V\mid U^2=(UV)^3=I;\,V^4=-I=(UVU)^4\,\rangle\simeq \tilde{S}_4, 
    \]
    where 
    \[
        U=2^{-1/2}\begin{pmatrix}
        1 & -(-1)^{1/4} \\
        (-1)^{3/4} & -1
        \end{pmatrix}\quad {\it and} \quad V=2^{-1/2}\begin{pmatrix}
        (-1)^{1/2}-1 & 0 \\
        0 & (-1)^{1/2}+1
        \end{pmatrix}.
    \]
    
    Let $\mathbb{F}_q=\mathbb{F}_{13}(u)$, by calculation, we can list all the transformations of automorphisms $\sigma\in{\rm{Aut}}(\mathfrak{C}/\mathbb{F}_q)$ on $x$ in Table \ref{tab_S_4}.
    \begin{table}[ht]
	\centering
	\begin{tabular}{cccccc}
		\hline
		$\sigma$ & $\sigma(x)$ & $\sigma$ & $\sigma(x)$ & $\sigma$ & $\sigma(x)$  \\ \hline
		$U$ & $\frac{(3u+5)x-2}{3x-(3u+5)}$ & $(UV^2)^2$ & $\frac{u+6}{(-5u-4)x}$ & $UVUV^3$ & $\frac{-3x+(3u+5)}{-(3u+5)x+2}$ \\ 
  
		$U^2$ & $x$ & $(UV^2)^3$ & $\frac{(2u-1)x+3}{-2x+(2u-1)}$ & $(UVUV^3)^2$ & $\frac{2x-(3u+5)}{(3u+5)x-3}$ \\ 
  
		$V$ & $-5x$ & $UV^3$ & $\frac{-2x-(3u+5)}{(3u+5)x+3}$ & $UV^2UV$ & $\frac{1}{x}$ \\ 
  
		$V^2$ & $-x$ & $(UV^3)^2$ & $\frac{-3x-(3u+5)}{(3u+5)x+2}$ & $UV^2UV^3$ & $\frac{-1}{x}$ \\ 
  
		$V^3$ & $5x$ & $UVU$ & $\frac{(-2u+1)x+2}{-3x+(-2u+1)}$ & $UV^3UV$ & $\frac{2x+(-2u+1)}{(-2u+1)x-3}$ \\ 
  
		$UV$ & $\frac{-3x+(-2u+1)}{(-2u+1)x+2}$ & $(UVU)^2$ & $\frac{5u+4}{(-u-6)x}$ & $(UV^3UV)^2$ & $\frac{-3x+(2u-1)}{(2u-1)x+2}$ \\ 
  
		$(UV)^2$ & $\frac{2x+(2u-1)}{(2u-1)x-3}$ & $(UVU)^3$ & $\frac{(-2u+1)x-2}{3x+(-2u+1)}$ & $UV^3UV^2$ & $\frac{(3u+5)x-3}{2x-(3u+5)}$ \\ 
  
		$UV^2$ & $\frac{(2u-1)x-3}{2x+(2u-1)}$ & $UVUV^2$ & $\frac{(3u+5)x+3}{-2x-(3u+5)}$ & $UVUV^2UV^3$ & $\frac{(3u+5)x+2}{-3x-(3u+5)}$ \\
        \hline
	\end{tabular}
	\caption{Automorphisms $\sigma\in{\rm{Aut}}(\mathfrak{C}/\mathbb{F}_q)$ on $x$}
	\label{tab_S_4}
    \end{table}
    
    Choose $P_{a,b}=P_{2,u-7}$ and $P=P_\infty$ in Proposition \ref{find_e_i}. By Sagemath, we can verify that
    \begin{align*}
        \text{the degree of the numerator of } &\prod_{\sigma\in{\rm{Aut}}(\mathfrak{C}/\mathbb{F}_q)}(\sigma(x)-2)\\
        &=\sum_{\sigma\in{\rm{Aut}}(\mathfrak{C}/\mathbb{F}_q)}\text{the degree of the numerator of }\sigma(x)-2.
    \end{align*}
    Therefore, by Theorem \ref{r_odd}(3) for any subgroup $\mathcal{G}\subseteq{\rm{Aut}}(\mathfrak{C}/\mathbb{F}_q)$ satisfying $-I\in\mathcal{G}$, we can obtain some either optimal or almost optimal $13^2$-ary locally repairable codes with locality $|\mathcal{G}|-1$ in Table \ref{LRC_y2=x5+x_1}.
    \begin{table}[ht]
	\centering
	\begin{tabular}{ccc}
        \hline
		Locality $r$ & Our locally repairable codes & The range of $\ell$ \\
        \hline
		$3$ & $[n=4\ell,\,k=3t-2,\,d\ge n-4(t-1)-1]_{13^2}$ & $1\le t\le\ell\le54$ \\
        
		$5$ & $[n=6\ell,\,k=5t-4,\,d\ge n-6(t-1)-1]_{13^2}$ & $1\le t\le\ell\le36$ \\
        
		$7$ & $[n=8\ell,\,k=7t-6,\,d\ge n-8(t-1)-1]_{13^2}$ & $1\le t\le\ell\le27$ \\
        
		$11$ & $[n=12\ell,\,k=11t-10,\,d\ge n-12(t-1)-1]_{13^2}$ & $1\le t\le\ell\le17$ \\
        
		$15$ & $[n=16\ell,\,k=15t-14,\,d\ge n-16(t-1)-1]_{13^2}$ & $1\le t\le\ell\le13$ \\
        
		$23$ & $[n=24\ell,\,k=23t-22,\,d\ge n-24(t-1)-1]_{13^2}$ & $1\le t\le\ell\le9$ \\
        
		$47$ & $[n=48\ell,\,k=47t-46,\,d\ge n-48(t-1)-1]_{13^2}$ & $1\le t\le\ell\le4$ \\
        \hline
	\end{tabular}
	\caption{Locally repairable codes over $\mathbb{F}_{13^2}$}
	\label{LRC_y2=x5+x_1}
    \end{table}
    
    {\rm (2)} Let $q=5^6$ and the hyperelliptic curve $\mathfrak{C}:y^2=x^5+x$. Then we have ${\rm{Aut}}(\mathfrak{C}/\mathbb{F}_q)\simeq \tilde{S}_5$. Let $\mathbb{F}_q=\mathbb{F}_{5}(u)$. Choose $P_{a,b}=P_{u,3u^5+2u^4+u^3+u^2+2}$ and $P=P_\infty$ in Proposition \ref{find_e_i}. By Sagemath, we can verify that
    \begin{align*}
        \text{the degree of the numerator of } &\prod_{\sigma\in{\rm{Aut}}(\mathfrak{C}/\mathbb{F}_q)}(\sigma(x)-u)\\
        &=\sum_{\sigma\in{\rm{Aut}}(\mathfrak{C}/\mathbb{F}_q)}\text{the degree of the numerator of }\sigma(x)-u.
    \end{align*}
    Therefore, by Theorem \ref{r_odd}(4) for any subgroup $\mathcal{G}\subseteq{\rm{Aut}}(\mathfrak{C}/\mathbb{F}_q)$ satisfying $-I\in\mathcal{G}$, we can obtain some either optimal or almost optimal $5^6$-ary locally repairable codes with locality $|\mathcal{G}|-1$ in Table \ref{LRC_y2=x5+x_2}.
    \begin{table}[ht]
	\centering
	\begin{tabular}{ccc}
        \hline
		Locality $r$ & Our locally repairable codes & The range of $\ell$ \\
        \hline
		$3$ & $[n=4\ell,\,k=3t-2,\,d\ge n-4(t-1)-1]_{5^6}$ & $1\le t\le\ell\le4030$ \\
        
		$5$ & $[n=6\ell,\,k=5t-4,\,d\ge n-6(t-1)-1]_{5^6}$ & $1\le t\le\ell\le2686$ \\
        
		$7$ & $[n=8\ell,\,k=7t-6,\,d\ge n-8(t-1)-1]_{5^6}$ & $1\le t\le\ell\le2014$ \\
        
        $9$ & $[n=10\ell,\,k=9t-0,\,d\ge n-10(t-1)-1]_{5^6}$ & $1\le t\le\ell\le1612$ \\
        
		$11$ & $[n=12\ell,\,k=11t-10,\,d\ge n-12(t-1)-1]_{5^6}$ & $1\le t\le\ell\le1343$ \\
       
		$15$ & $[n=16\ell,\,k=15t-14,\,d\ge n-16(t-1)-1]_{5^6}$ & $1\le t\le\ell\le1007$ \\
        
		$19$ & $[n=20\ell,\,k=19t-18,\,d\ge n-20(t-1)-1]_{5^6}$ & $1\le t\le\ell\le806$ \\
       
		$23$ & $[n=24\ell,\,k=23t-22,\,d\ge n-24(t-1)-1]_{5^6}$ & $1\le t\le\ell\le671$ \\
        
        $39$ & $[n=40\ell,\,k=39t-38,\,d\ge n-40(t-1)-1]_{5^6}$ & $1\le t\le\ell\le403$ \\
        
        $47$ & $[n=48\ell,\,k=47t-46,\,d\ge n-48(t-1)-1]_{5^6}$ & $1\le t\le\ell\le335$ \\
        
        $119$ & $[n=120\ell,\,k=119t-118,\,d\ge n-120(t-1)-1]_{5^6}$ & $1\le t\le\ell\le134$ \\
        
        $239$ & $[n=240\ell,\,k=239t-238,\,d\ge n-240(t-1)-1]_{5^6}$ & $1\le t\le\ell\le67$ \\
        \hline
	\end{tabular}
	\caption{Locally repairable codes over $\mathbb{F}_{5^6}$}
	\label{LRC_y2=x5+x_2}
    \end{table}

    {\rm (3)} Let $q=9^2$ and the hyperelliptic curve $\mathfrak{C}:y^2=x^5+1$. Choose $P_{a,b}=P_{1,2^{1/2}}$ and $P=P_\infty$ in Proposition \ref{find_e_i}. Then, by Theorem \ref{r_odd}(2) and Remark \ref{r_odd_re}(1), we can obtain an either optimal or almost optimal $9^2$-ary
    \[
        [n=10\ell,\,k=9t-8,\,d\ge n-10(t-1)-1]_{9^2}
    \]
    locally repairable code with locality $9$ for any $1\le t\le\ell\le11$. 

    {\rm (4)} Let $q=11^2$. Note that $p=11$ does not satisfy the conditions in Theorem \ref{r_odd}, thus we can not use the hyperelliptic curve above. By Remark \ref{r_odd_re}(2), we could consider the curve $\mathfrak{C}:y^2=x^5+x^3+6x$. Since $4\mid q-1$ and $6^{1/4}\in\mathbb{F}_q$, then by Lemma \ref{auto_non}(i), we have ${\rm{Aut}}(\mathfrak{C}/\mathbb{F}_q)=\langle \,U,\,V\,\rangle\simeq D_8$ where 
    \[
        U=\begin{pmatrix}
        -(-1)^{1/2} & 0 \\
        0 & (-1)^{1/2}
        \end{pmatrix}\quad {\it and} \quad V=\begin{pmatrix}
        0 & 6^{1/4} \\
        6^{-1/4} & 0
        \end{pmatrix}.
    \]
    
    Choose $P_{a,b}=P_{1,8^{1/2}}$ and $P=P_\infty$ in Proposition \ref{find_e_i}. Then, it is easy to verify that
    \begin{align*}
        \text{the degree of the numerator of } &\prod_{\sigma\in{\rm{Aut}}(\mathfrak{C}/\mathbb{F}_q)}(\sigma(x)-1)\\
        &=\sum_{\sigma\in{\rm{Aut}}(\mathfrak{C}/\mathbb{F}_q)}\text{the degree of the numerator of }\sigma(x)-1.
    \end{align*}
    Therefore, for any subgroup $\mathcal{G}\subseteq{\rm{Aut}}(\mathfrak{C}/\mathbb{F}_q)$, we can obtain some either optimal or almost optimal $11^2$-ary locally repairable codes with locality $|\mathcal{G}|-1$ in Table \ref{LRC_y2=x5+x3+6x}.
    \begin{table}[ht]
	\centering
	\begin{tabular}{ccc}
        \hline
		Locality $r$ & Our locally repairable codes & The range of $\ell$ \\
        \hline
		$3$ & $[n=4\ell,\,k=3t-2,\,d\ge n-4(t-1)-1]_{11^2}$ & $1\le t\le\ell\le40$ \\

		$7$ & $[n=8\ell,\,k=7t-6,\,d\ge n-8(t-1)-1]_{11^2}$ & $1\le t\le\ell\le19$ \\
        \hline
	\end{tabular}
	\caption{Locally repairable codes over $\mathbb{F}_{11^2}$}
	\label{LRC_y2=x5+x3+6x}
    \end{table}
\end{example}

As we have seen here, all the upper bounds of $\ell$ in Example \ref{r_odd_almost} are greater than $\big\lfloor\frac{q+4\sqrt{q}-2r-3}{r+1}\big\rfloor$. This is consistent with Remark \ref{upper_bound}. In the following, we will give some specific examples about optimal locally repairable codes. 

\begin{example}\label{r_odd_optimal}
    {\rm{(1)}} Let $q=3^2$ and $\mathbb{F}_q=\mathbb{F}_3(u)$. Consider the hyperelliptic curve $\mathfrak{C}:y^2=x^5+x^3+2x$. By Lemma \ref{auto_non}(i), we can take 
    \[
        \mathcal{G}=\langle \,U=\begin{pmatrix}
        -(-1)^{1/2} & 0 \\
        0 & (-1)^{1/2}
        \end{pmatrix}\,\rangle.
    \]
    Then $\{\sigma(x):\sigma\in\mathcal{G}\}=\{x,\,-x\}$. Following Proposition \ref{find_e_i}, choose $P_{a,b}=P_{1,1}$ and $P=P_\infty$, then it is easy to see that
    \[
        \deg((x-1)(-x-1))=\deg(x-1)+\deg(-x-1).
    \]
    Hence, we can find $e_1,e_2,e_3$ such that
    \[
        \mathcal{L}(P_\infty+P_{1,1}+P_{1,2}+P_{2,u+1})=\langle\,e_1,e_2,e_3\,\rangle.
    \]
    More precisely, the elements $e_1=1,\,e_2=\frac{x}{x+2},\,e_3=\frac{y+u+1}{x^2+2},\,z=\frac{1}{x^2+2}$ and there are three other sets of evaluation points
    \[
        \{P_{u,1},P_{u,2},P_{2u,u+1},P_{2u,2u+2}\},
    \]
    \[
        \{P_{u+2,u+1},P_{u+2,2u+2},P_{2u+1,1},P_{2u+1,2}\},
    \]
    \[
        \text{and } \{P_{2u+2,u},P_{2u+2,2u},P_{u+1,u+2},P_{u+1,2u+1}\}.
    \]
    
    By Proposition \ref{con_1} and Sagemath, we can construct the following optimal $3^2$-ary locally repairable codes with locality $3$ in Table \ref{opt_LRC_y2=x5+x3+2x}.
    \begin{table}[ht]
	\centering
	\begin{tabular}{c}
        \hline
		Our optimal locally repairable codes \\
        
        $[n=4\ell,\,k=3t-2,\,d=n-4(t-1)]_{3^2}$\\
        
        \hline
		$[8,\,4,\,4]_{3^2}$ \\

		$[12,\,4,\,8]_{3^2}$ \\

        $[12,\,7,\,4]_{3^2}$ \\

        $[16,\,10,\,4]_{3^2}$ \\
        \hline
	\end{tabular}
	\caption{Optimal locally repairable codes with locality $3$ over $\mathbb{F}_{3^2}$}
	\label{opt_LRC_y2=x5+x3+2x}
    \end{table}

    {\rm(2)} Let $q=5^2$ and $\mathbb{F}_q=\mathbb{F}_5(u)$. Consider the hyperelliptic curve $\mathfrak{C}:y^2=x^5+x$. By Lemma \ref{auto_max_1}(ii), we take
    \[
        \mathcal{G}=\langle \,V^\prime=\begin{pmatrix}
        0 & -(-1)^{-1/4}\cdot2 \\
        -(-1)^{1/4}\cdot2 & 1
        \end{pmatrix}\,\rangle.
    \]
    Then $\{\sigma(x):\sigma\in\mathcal{G}\}=\{x,\,\frac{x+2u-1}{(u+2)x},\,\frac{2u-1}{(u+2)x-1}\}$. Choose $P_{a,b}=P_{1,u+2}$ and $P=P_\infty$ in Proposition \ref{find_e_i}, we obtain
    \[
        \prod_{\sigma\in\mathcal{G}}(\sigma(x)-1)=(x-1)\cdot\frac{(4u+4)x+(2u-1)}{(u+2)x}\cdot\frac{(4u+3)x+2u}{(u+2)x-1}=\frac{4ux^3+2ux^2+(u+2)x-(2u+2)}{2x^2-(u+2)x}.
    \]
    Thus
    \begin{align*}
        \text{the degree of the numerator of } &\prod_{\sigma\in\mathcal{G}}(\sigma(x)-1)\\
        &=\sum_{\sigma\in\mathcal{G}}\text{the degree of the numerator of }\sigma(x)-1.
    \end{align*}
    Hence, we can find the elements $e_1=1$, $e_2=\frac{x}{x+4}$, $e_3=\frac{y+2}{x^2+(2u+1)x+3u+3}$, $e_4=\frac{y}{x^2+(2u+1)x+3u+3}$, $e_5=\frac{(x+3u+3)y+2u+4}{x^3 + 3x^2 + (u + 3)x + 4u + 3}$, $z=\frac{1}{x^3 + 3x^2 + (u + 3)x + 4u + 3}$, such that 
    \[
        \mathcal{L}(P_\infty+P_{1,u+2}+P_{1,4u+3}+P_{3u+3,2}+P_{3u+3,3}+P_{2u+3,2u+4})=\langle\,e_1,e_2,e_3,e_4,e_5\,\rangle.
    \]
    In addition, there are five other sets of evaluation points 
    \[
        \{P_{2u,u+2},P_{2u,4u+3},P_{3u+4,1},P_{3u+4,4},P_{4,2u+4},P_{4,3u+1}\},
    \]
    \[
        \{P_{4u+4,u+2},P_{4u+4,4u+3},P_{2u+2,1},P_{2u+2,4},P_{u,1},P_{u,4}\},
    \]
    \[
        \{P_{u+4,2},P_{u+4,3},P_{2u+1,2},P_{2u+1,3},P_{4u+2,2u+4},P_{4u+2,3u+1}\},
    \]
    \[
        \{P_{3u+2,u+2},P_{3u+2,4u+3},P_{2,2},P_{2,3},P_{u+1,2u+4},P_{u+1,3u+1}\},
    \]
    \[
        \text{and } \{P_{u+3,u+2},P_{u+3,4u+3},P_{3,1},P_{3,4},P_{3u,2u+4},P_{3u,3u+1}\}.
    \]

    By Proposition \ref{con_1} and Sagemath, we can construct the following optimal $5^2$-ary locally repairable codes with locality $5$ in Table \ref{opt_LRC_y2=x5+x}.
    \begin{table}[ht]
	\centering
	\begin{tabular}{c}
        \hline
		Our optimal locally repairable codes \\

        $[n=6\ell,\,k=5t-4,\,d=n-6(t-1)]_{5^2}$\\
        \hline
		$[12,\,6,\,6]_{5^2}$ \\

		$[18,\,11,\,6]_{5^2}$ \\

        $[24,\,16,\,6]_{5^2}$ \\

        $[30,\,21,\,6]_{5^2}$ \\

        $[36,\,26,\,6]_{5^2}$ \\
        \hline
	\end{tabular}
	\caption{Optimal locally repairable codes with locality $5$ over $\mathbb{F}_{5^2}$}
	\label{opt_LRC_y2=x5+x}
    \end{table}
\end{example}

\begin{remark}
    Note that, the lengths of the optimal $[16,\,10,\,4]_{3^2}$ and $[36,\,26,\,6]_{5^2}$ locally repairable codes in Example \ref{r_odd_optimal} are longer than $q+2\sqrt{q}$, that is to say, the length of our optimal locally repairable code could exceed the bound in \cite{Optimal_Locally_Repairable_Codes_Via_Elliptic_Curves}. 
\end{remark}

\subsection{Locality r even}

Besides the subgroups $\mathcal{G}$ of ${\rm{Aut}}(\mathfrak{C}/\mathbb{F}_q)$ previously discussed where $|\mathcal{G}|$ is an even integer, there exist subgroups of odd order that can be employed to construct locally repairable codes with even locality.

\begin{theorem}\label{r_even}
    Let $q=p^{2a}$ be a power of an odd prime $p$ and an odd number $a>0$. If $p\equiv -1 \pmod{5}$ or $p=5$, then there exists an either optimal or almost optimal q-ary
    \[
        [n=5\ell,\,k=4t+1,\,d\ge n-5t-1]_q
    \]
    locally repairable code with locality $4$ for any $0\le t<\ell\le\big\lfloor\frac{q+4\sqrt{q}-11}{5}\big\rfloor$.
\end{theorem}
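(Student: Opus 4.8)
The plan is to specialize the general machinery of Proposition~\ref{con_1} to a hyperelliptic curve admitting an automorphism subgroup $\mathcal{G}$ of \emph{odd} order $5$, so that $r+1 = \#\mathcal{G} = 5$ and hence the locality is $r = 4$. Since $5$ is odd, the hypothesis $\#\mathcal{G}=r+1=2s$ of Proposition~\ref{find_e_i} fails, so one cannot invoke that proposition verbatim; this is exactly what makes the ``$r$ even'' case structurally different, and the construction must be redone with $\mathcal{G}$ acting by $\#\{\sigma(x):\sigma\in\mathcal{G}\}=5$ (no hyperelliptic involution inside $\mathcal{G}$). First I would choose the curve: for $p\equiv -1\pmod 5$ take $\mathfrak{C}:y^2=x^5+1$, which by Lemma~\ref{max_hy}(ii) and Lemma~\ref{maximality} is maximal over $\mathbb{F}_q$ with $q=p^{2a}$, $a$ odd, so $N=\#\mathbb{P}_F^1 = q+1+4\sqrt q$; for $p=5$ take $\mathfrak{C}:y^2=x^5+x$, which by Lemmas~\ref{auto_max_1}(ii) (in characteristic $5$) and \ref{maximality} is likewise maximal over $\mathbb{F}_q$. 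In both cases $5\mid q-1$ (since $p\equiv-1\pmod5$ forces $5\mid p^2-1\mid q-1$, and trivially for $p=5$ one checks $5\mid 5^{2a}-1$... actually here one uses the $\tilde S_5$ structure instead), and one locates inside ${\rm{Aut}}(\mathfrak{C}/\mathbb{F}_q)$ a cyclic subgroup $\mathcal{G}=\langle\sigma\rangle$ of order $5$ acting on $x$ by a M\"obius transformation of order $5$: for $y^2=x^5+1$ one has $\sigma(x)=\zeta x$, $\sigma(y)=y$ with ${\rm ord}(\zeta)=5$; for the $p=5$ case one extracts the appropriate order-$5$ element from the generators of $\tilde S_5$ given in Lemma~\ref{auto_max_1}(ii).

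Next I would build the fixed field and the $e_i$'s by hand, mimicking the proof of Proposition~\ref{find_e_i} but with $s$ replaced by $5$ in the relevant places. Set $z=\prod_{i=0}^{4}\frac{1}{\sigma^i(x)-a}$ for a suitable $a$ with $P_{a,b}$ splitting completely in $F/F^{\mathcal{G}}$ and satisfying the pole/zero-separation condition (\ref{fix}); then $z\in F^{\mathcal{G}}$, $(z)_\infty^{\mathbb{F}_q(x)}$ has degree $5$, and the conorm computation gives $(z)^F=(z)_0^F-\sum_{j=1}^{5}P_j$ with $P_j=\sigma^{j-1}(P_{a,b})$, so $\deg(z)_\infty^F = 5 = [F:F^{\mathcal{G}}]$ and $F^{\mathcal{G}}=\mathbb{F}_q(z)$ by \cite[Thm.~1.4.11]{Algebraic_Function_Fields_and_Codes}. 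Now the key difference from the odd case: here $(\frac{1}{x-a})^F = \sum_{i=0}^{4}({\rm something})$ does \emph{not} simplify to $D_\infty - P_1 - P_2$, because there is no involution identifying $P_{a,b}$ with some $\bar P$ inside $\mathcal{G}$; instead $\frac1{x-a}$ has a pole only at the pole(s) of $x$, which lie over a single place of $F^{\mathcal{G}}$. So I would take $e_1=1$ and then, for $2\le i\le 4=r$, use the Riemann--Roch counting argument $\#\bigcup_{j=1}^{i}\mathcal{L}(P+\sum_{u=1}^{i}P_u-P_j)=iq^{i-1}<q^i=\#\mathcal{L}(P+\sum_{u=1}^{i}P_u)$ to produce $e_i$ with $v_{P_j}(e_i)=-1$ for $j\le i$, $v_{P}(e_i)\ge-1$, and $v_{P'}(e_i)\ge0$ elsewhere, giving $\mathcal{L}(P+\sum_{u=1}^{4}P_u)=\langle e_1,e_2,e_3,e_4\rangle$ with $e_1,\dots,e_4$ linearly independent over $F^{\mathcal{G}}$ by the same strict-triangle-inequality argument. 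The dimension count $k=4t+1$ (rather than $rt-(r-1)=4t-3$) comes from a slightly different choice of $V$: one should take $V=\{\sum_{j=1}^{t+1}a_{1,j}z^{j-1}e_1+\sum_{i=2}^{4}\sum_{j=1}^{t}a_{i,j}z^{j-1}e_i\}$, so $\dim_{\mathbb{F}_q}V=(t+1)+3t=4t+1$, contained in $\mathcal{L}(G)$ for $G = t(P_1+\cdots+P_5)+P$ of degree $5t+1$.

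Then I would re-run the two local-recovery verifications. The ``within-block'' invertibility of every $r\times r$ submatrix of $M=(e_u(P_{i,j}))$ follows exactly as in Proposition~\ref{e_i_local}: a nontrivial kernel vector produces $f=\sum c_u e_u$ with $\deg(f)_0^F\ge 4$ while $\deg(f)_\infty\le 5$, hence $(f)^F = P'+\sum_{j=1}^{4}P_{i,j}-(P+\sum_{j=1}^{4}P_j)$; dividing by $z-a_i$ (whose divisor is $\sum_{j=1}^{5}P_{i,j}-\sum_{j=1}^{5}P_j$) yields $[P'+P_5-D_\infty]=[P+P_{i,5}-D_\infty]$ in ${\rm{Cl}}^0(F)$, contradicting Lemma~\ref{unique} once $\mathcal{P}$ is chosen to avoid ${\rm supp}((z)_\infty^F)\cup\{\bar P\}$. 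The ``boundary-block'' recovery at $\{P_1,\dots,P_5\}$ works through $M_z=(e_1,e_2/z,\dots,e_4/z)$ evaluated at the $P_j$; here one must be a little careful because, unlike the odd case, there is no relation $\bar P_{r+1}=P_r$ forcing $f_z(P_{5})=0$, so I expect this is the main obstacle: one needs to verify directly that $M_z$ has the staircase shape of (\ref{M_z_specific}) — i.e. that $(e_i/z)(P_j)=0$ for $j<i$ and $\ne0$ for $j=i$ — which follows from $v_{P_j}(e_i/z) = -1 - (-1) = 0$ when $j\le i$... wait, more carefully from $v_{P_j}(e_i)=-1$ and $v_{P_j}(z)=-1$ giving $v_{P_j}(e_i/z)=0$, combined with the exact pole/zero orders to pin down which entries vanish. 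Granting the staircase form, back-substitution recovers all the $a_{i,t}$ from the other $4$ positions, so $C_{\mathcal{L}}(\tilde{\mathcal{P}},V)$ has locality $4$. Finally, $\deg{\rm{Diff}}(F/F^{\mathcal{G}})=2g(F)-2-5(2\cdot0-2)=2+10=12$ by Hurwitz (\ref{Hurwitz_genus_formula}), so at most $12$ rational places ramify; taking $P$ among them, the usable range is $1\le t<\ell\le\lfloor\frac{N-12}{5}\rfloor=\lfloor\frac{q+4\sqrt q+1-12}{5}\rfloor=\lfloor\frac{q+4\sqrt q-11}{5}\rfloor$ (and $t=0$ gives the trivial $k=1$ endpoint), and the Singleton-type bound (\ref{min_d_bound}) with $d\le n-k-\lceil k/r\rceil+2 = 5\ell-(4t+1)-\lceil(4t+1)/4\rceil+2 = 5\ell - 4t - 1 - (t+1) + 2$... $= 5\ell-5t$, i.e. $= 5(\ell-t)$, while the AG bound gives $d\ge n-\deg G = 5\ell-5t-1$, forcing $\Delta\le1$. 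This completes the proof.
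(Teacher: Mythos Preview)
Your proposal takes a genuinely different route from the paper, and the route you chose has a real gap.

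\textbf{What the paper does.} The paper does \emph{not} try to adapt Propositions~\ref{find_e_i}--\ref{e_i_local} to an odd-order group. Instead it exploits the fact that for both curves the order-$5$ subgroup $\mathcal{G}$ fixes the place $P_\infty$, so $P_\infty\cap F^{\mathcal{G}}$ is totally ramified and (by Hurwitz) $F^{\mathcal{G}}$ is rational. One then picks $z\in F^{\mathcal{G}}$ with $(z)_\infty^F=5P_\infty$, notes $(x)_\infty^F=2P_\infty$, and uses the \emph{explicit} basis $e_1=1,\,e_2=x,\,e_3=x^2,\,e_4=x^3$. The function space is
\[
V'=\{f_0(z)+f_1(z)x+f_2(z)x^2+f_3(z)x^3:\deg f_0\le t,\ \deg f_j\le t-1\ (j=1,2,3)\}\subseteq\mathcal{L}((5t+1)P_\infty),
\]
and the local-recovery matrix at each block is Vandermonde in the distinct values $x(P_{i,1}),\dots,x(P_{i,5})$, hence every $4\times4$ minor is automatically nonzero. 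The dimension $4t+1$ follows from a one-line valuation argument at $P_\infty$ (the four summands have pairwise distinct residues of $v_{P_\infty}$ modulo $5$). No divisor-class-group argument, no $M_z$, no Riemann--Roch counting is needed.

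\textbf{Where your argument breaks.} Your attempt to push the general framework through fails at two points. First, for $e_2$ you cannot use $\frac{1}{x-a}$: in the odd-order case $P_2=\sigma(P_{a,b})$ is \emph{not} $\bar P_{a,b}$, so $\frac{1}{x-a}$ has poles at $P_{a,b},\bar P_{a,b}$ rather than at $P_1,P_2$. You therefore need Riemann--Roch already at $i=2$, and since $\deg(P+P_1+P_2-P_j)=2=2g-2$, the dimension of $\mathcal{L}(P+P_1+P_2-P_j)$ is not determined by Riemann--Roch alone; you would have to impose and verify extra conditions on $P$ to rule out $P+P_j\sim K$. Second, and more seriously, your boundary-block argument for $M_z$ is incomplete. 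The staircase shape you describe does hold, but it only directly shows that the full $5\times4$ matrix has rank $4$; it does \emph{not} show that every $4\times4$ submatrix is invertible. For instance, deleting row $3$ leaves a $4\times4$ matrix whose determinant reduces to a $2\times2$ minor $(e_2/z)(P_1)(e_3/z)(P_2)-(e_3/z)(P_1)(e_2/z)(P_2)$, which nothing in your argument forces to be nonzero. The paper's proof of Proposition~\ref{e_i_local} closes this gap via the relation $\bar P_{r+1}=P_r$, which you correctly note is unavailable here---and you do not supply a substitute.

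In short: the paper bypasses the whole issue by choosing $e_i=x^{i-1}$ so that invertibility is Vandermonde; your general-framework adaptation would require substantial extra work that you have not carried out.
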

\begin{proof}
    (i) If $p\equiv -1 \pmod{5}$, then by the proof of Theorem \ref{r_odd}(2), we know that the hyperelliptic curve $\mathfrak{C}:y^2=x^5+1$ is maximal over $\mathbb{F}_q$. We consider the subgroup $\mathcal{G}=\langle V\rangle$ where
    \[
        V=\begin{pmatrix}
        \zeta^2 & 0 \\
        0 & \zeta
        \end{pmatrix}\,({\rm ord}(\zeta)=5)
    \]
    and $|\mathcal{G}|=5$. Let $F$ be the function field $\mathbb{F}_q(\mathfrak{C})$. By (\ref{trans_eq}), we can conclude that $\sigma(P_\infty)=P_\infty$ for any $\sigma\in\mathcal{G}$, which means that $P_\infty\cap F^{\mathcal{G}}$ is totally ramified in $F/F^{\mathcal{G}}$. Hence, by the Hurwitz Genus Formula (\ref{Hurwitz_genus_formula}), we have
    \[
        2=2g(F)-2\ge [F:F^{\mathcal{G}}](2g(F^{\mathcal{G}})-2)+[F:F^{\mathcal{G}}]-1=5(2g(F^{\mathcal{G}})-2)+4.
    \]
    It follows that $g(F^{\mathcal{G}})=0$ i.e. $F^{\mathcal{G}}$ is a rational function field. For this reason, we can choose $z\in F^{\mathcal{G}}$ such that $(z)_{\infty}^{F^{\mathcal{G}}}=P_\infty\cap F^{\mathcal{G}}$. Then we have $(z)_{\infty}^{F}=5P_\infty$. In addition, we know that $(x)_{\infty}^{F}=2P_\infty$. 
    
    Now consider the set of functions
    \[
        V^\prime=\Big\{ f_0(z)+f_1(z)x+f_2(z)x^2+f_3(z)x^3:f_i(z)\in\mathbb{F}_q[z];\,\deg(f_0)\le t;\,\deg(f_j)\le t-1,\,j=1,2,3 \Big\}.
    \]
    It can be verified that $V$ is a subspace of $\mathcal{L}((5t+1)P_\infty)$. Then we prove that $\dim_{\mathbb{F}_q} V=4t+1$. Suppose that 
    \[
        f_0(z)+f_1(z)x+f_2(z)x^2+f_3(z)x^3=0,
    \]
    where $f_i(z)\,(0\le i\le 3)$ are not all zero polynomials. Since 
    \[
        v_{P_\infty}(f_i(z)x^i)=\deg(f_i(z))\cdot v_{P_\infty}(z)+i\cdot v_{P_\infty}(x)\equiv 3i \bmod{5},
    \]
    for $f_i(z)\neq 0$ and $i=0,1,2,3$. Therefore $v_{P_\infty}(f_i(z)x^i)\neq v_{P_\infty}(f_j(z)x^j)$ whenever $i\neq j$ and $f_i(z)\neq 0$. The Strict Triangle Inequality yields
    \[
        v_{P_\infty}\Big(\sum_{i=0}^3f_i(z)x^i\Big)=\mathop{\min}_{0\le i\le 3} \{ \,v_{P_\infty}(f_i(z)x^i):f_i(z)\neq 0 \,\}<\infty,
    \]
    which is a contradiction. It implies that $\dim_{\mathbb{F}_q} V=4t+1$. 
    
    Let $\mathcal{P}$ be defined as (\ref{ev_P}). Define the algebraic geometry code
    \[
        C_{\mathcal{L}}(\mathcal{P},V^\prime)=\{(f(P))_{P\in\mathcal{P}}:f\in V^\prime\}.
    \]
    Then the code $C_{\mathcal{L}}(\mathcal{P},V^\prime)$ is a $q$-ary 
    \[
        [n=5\ell,\,k=4t+1,\,d\ge n-5t-1]_q
    \]
    linear code. Note that every $4\times 4$ submatrix of the matrix 
    \[
        \begin{pmatrix}\,
        1 & x(P_{i,1}) & x^2(P_{i,1}) & x^3(P_{i,1}) \\
        1 & x(P_{i,2}) & x^2(P_{i,2}) & x^3(P_{i,2}) \\
        1 & x(P_{i,3}) & x^2(P_{i,3}) & x^3(P_{i,3}) \\
        1 & x(P_{i,4}) & x^2(P_{i,4}) & x^3(P_{i,4}) \\
        1 & x(P_{i,5}) & x^2(P_{i,5}) & x^3(P_{i,5})
        \,\end{pmatrix}
    \]
    is a Vandermonde matrix, therefore it is invertible. Hence, by Lemma \ref{local}$, C_{\mathcal{L}}(\mathcal{P},V^\prime)$ is also a locally repairable code with locality 4. The bound (\ref{min_d_bound}) shows that  
    \begin{align*}
        d&\le n-k-\bigg\lceil\frac{k}{r}\bigg\rceil+2\\
        &=n-4t-1-\bigg\lceil\frac{4t+1}{4}\bigg\rceil+2\\
        &=n-5t,
    \end{align*}
    which implies that $C_{\mathcal{L}}(\mathcal{P},V^\prime)$ is an either optimal or almost optimal locally repairable code with locality 4. The upper bound of $\ell$ is easy to obtain from the proof of Proposition \ref{con_1}. 

    (ii) If $p=5$, then by the proof of Theorem \ref{r_odd}(1), we know that $\mathfrak{C}:y^2=x^5+x$ is maximal over $\mathbb{F}_q$. We consider the subgroup $\mathcal{G}=\langle H=(UV)^2 \rangle$ where
    \[
        U=\begin{pmatrix}
        0 & -(-1)^{-1/4}\cdot2 \\
        -(-1)^{1/4}\cdot2 & 0
        \end{pmatrix}\quad {\rm and} \quad V=\begin{pmatrix}
        0 & -(-1)^{-1/4}\cdot2 \\
        -(-1)^{1/4}\cdot2 & 1
        \end{pmatrix}.
    \]
    By calculation, we can get
    \[
        H=\begin{pmatrix}
        1 & -(-1)^{-1/4} \\
        0 & 1
        \end{pmatrix}
    \]
    and $|H|=5$. Let $F$ be the function field $\mathbb{F}_q(\mathfrak{C})$. By (\ref{trans_eq}), we can also conclude that $\sigma(P_\infty)=P_\infty$ for any $\sigma\in\mathcal{G}$. The rest of the proof is completely similar to the Part (i) above. 
\end{proof}

\begin{remark}\label{r_even_re}
    {\rm (1)} For the case of locality $r=2$, we can consider the hyperelliptic curve $\mathfrak{C}:y^2=x^6+x^3+\tau$ over $\mathbb{F}_q$ in Lemma \ref{auto_non}(ii). Let $q$ be a power of an odd prime other than 3. Suppose that $3\mid q-1$, then we have $\mathcal{G}=\langle H=U^2 \rangle\subseteq{\rm{Aut}}(\mathfrak{C}/\mathbb{F}_q)$ by Lemma \ref{auto_non}(ii) where
    \[
        U=\begin{pmatrix}
        -\alpha^2 & 0 \\
        0 & -\alpha
        \end{pmatrix}\,({\rm ord}(\alpha)=3).
    \]
    By calculation, we can get
    \[
        H=\begin{pmatrix}
        \alpha & 0 \\
        0 & \alpha^2
        \end{pmatrix}
    \]
    and $|H|=3$. Let $F$ be the function field $\mathbb{F}_q(\mathfrak{C})$. By (\ref{trans_eq}), we conclude that $\sigma(D_\infty)=D_\infty$ for any $\sigma\in\mathcal{G}$. The Hurwitz Genus Formula (\ref{Hurwitz_genus_formula}) yields
    \[
        2=2g(F)-2\ge [F:F^{\mathcal{G}}](2g(F^{\mathcal{G}})-2)+([F:F^{\mathcal{G}}]-1)\deg D_\infty=3(2g(F^{\mathcal{G}})-2)+4.
    \]
    It follows that $F^{\mathcal{G}}$ is a rational function field. Moreover, for any $\sigma\in\mathcal{G}$, we also have 
    \[
        \sigma(P_{0,\tau^{1/2}})=P_{0,\tau^{1/2}}\quad  {\it and} \quad \sigma(\bar{P}_{0,\tau^{1/2}})=\bar{P}_{0,\tau^{1/2}}.
    \]
    By the Hurwitz Genus Formula, the places which are ramified in $F/F^{\mathcal{G}}$ are exactly 
    \[
        P_{\infty^+}\cap F^{\mathcal{G}},\;P_{\infty^-}\cap F^{\mathcal{G}},\;P_{0,\tau^{1/2}}\cap F^{\mathcal{G}},\;\bar{P}_{0,\tau^{1/2}}\cap F^{\mathcal{G}}. 
    \]
    
    Choose $z\in F^{\mathcal{G}}$ such that $(z)_{\infty}^{F}=3P_{\infty^+}$ and $x\in F$ such that $(x)_{\infty}^{F}=P_{\infty^+}+P_{\infty^-}$. Consider the $\mathbb{F}_q$-space 
    \[
        V^\prime=\Big\{ f_0(z)+f_1(z)x:f_i(z)\in\mathbb{F}_q[z];\,\deg(f_0)\le t;\,\deg(f_1)\le t-1 \Big\}.
    \]
    Similar to the proof of Theorem \ref{r_even}, we can obtain $C_{\mathcal{L}}(\mathcal{P},V^\prime)$ is an either optimal or almost optimal q-ary
    \[
        [n=3\ell,\,k=2t+1,\,d\ge n-3t-1]_q
    \]
    locally repairable code with locality $2$ for any $1\le t<\ell\le\big\lfloor\frac{N}{3}\big\rfloor-1$ where $N=\#\mathbb{P}_F^1$. 

    {\rm (2)} In the proof of Theorem \ref{r_even}, if we take $\deg(f_1)\le t$ and keep the rest unchanged, then $C_{\mathcal{L}}(\mathcal{P},V^\prime)$ will be an either optimal or almost optimal q-ary
    \[
        [n=5\ell,\,k=4t+2,\,d\ge n-5t-2]_q
    \]
    locally repairable code with locality $4$ for any $0\le t<\ell\le\big\lfloor\frac{q+4\sqrt{q}-11}{5}\big\rfloor$. 
\end{remark}

\begin{example}\label{r_even_alm_opt}
    {\rm (1)} Let $q=5^2$ and the hyperelliptic curve $\mathfrak{C}:y^2=x^6+x^3+2$. Choose $z=y+4x^3$ in Remark \ref{r_even_re}(1). Then we can obtain an either optimal or almost optimal $5^2$-ary
    \[
        [n=3\ell,\,k=2t+1,\,d\ge n-3t-1]_{5^2}
    \]
    locally repairable code with locality $2$ for any $0\le t<\ell\le12$. By Sagemath, we can obtain some long optimal $5^2$-ary locally repairable codes with locality $2$ in Table \ref{LRC_y2=x6+x3+2_opt}.
    \begin{table}[ht]
	\centering
	\begin{tabular}{c}
        \hline
		Our optimal locally repairable codes \\

        $[n=3\ell,\,k=2t+1,\,d=n-3t]_{5^2}$ \\
        \hline
        $[30,\,17,\,6]_{5^2}$ \\

        $[30,\,19,\,3]_{5^2}$ \\
        
        $[33,\,19,\,6]_{5^2}$ \\

        $[33,\,21,\,3]_{5^2}$ \\

        $[36,\,21,\,6]_{5^2}$ \\

        $[36,\,23,\,3]_{5^2}$ \\
        \hline
	\end{tabular}
	\caption{Optimal locally repairable codes with locality $2$ over $\mathbb{F}_{5^2}$}
	\label{LRC_y2=x6+x3+2_opt}
    \end{table}

    {\rm (2)} Let $q=5^2$ and the hyperelliptic curve $\mathfrak{C}:y^2=x^5+x$. Choose $z=y$ in Theorem \ref{r_even}. Then we can obtain some either optimal or almost optimal $5^2$-ary locally repairable codes with locality $4$ in Table \ref{LRC_y2=x5+x_3}.
    \begin{table}[ht]
	\centering
	\begin{tabular}{cc}
        \hline
		Our locally repairable codes & The range of $\ell$ \\
        \hline
		$[n=5\ell,\,k=4t+1,\,d\ge n-5t-1]_{5^2}$ & $0\le t<\ell\le9$ \\

		$[n=5\ell,\,k=4t+2,\,d\ge n-5t-2]_{5^2}$ & $0\le t<\ell\le9$ \\
        \hline
	\end{tabular}
	\caption{Locally repairable codes with locality $4$ over $\mathbb{F}_{5^2}$}
	\label{LRC_y2=x5+x_3}
    \end{table}

    By Sagemath, we can obtain some optimal $5^2$-ary locally repairable codes with locality $4$ in Table and \ref{LRC_y2=x5+x_3_opt_2}.

    \begin{table}[ht]
	\centering
	\begin{tabular}{cc}
        \hline
        \multicolumn{2}{c}{Our optimal locally repairable codes}          \\		

        $[n=5\ell,\,k=4t+1,\,d=n-5t]_{5^2}$ & $[n=5\ell,\,k=4t+2,\,d=n-5t-1]_{5^2}$ \\
        \hline
        $[10,\,5,\,5]_{5^2}$ & $[25,\,2,\,24]_{5^2}$ \\

        $[15,\,5,\,10]_{5^2}$ & $[25,\,18,\,4]_{5^2}$ \\

        $[15,\,9,\,5]_{5^2}$ & $[30,\,22,\,4]_{5^2}$ \\

        $[20,\,9,\,10]_{5^2}$ & $[35,\,26,\,4]_{5^2}$ \\

        $[20,\,13,\,5]_{5^2}$ & $[40,\,30,\,4]_{5^2}$ \\

        $[25,\,17,\,5]_{5^2}$ & $[45,\,34,\,4]_{5^2}$ \\
        \hline
	\end{tabular}
	\caption{Optimal locally repairable codes with locality $4$ over $\mathbb{F}_{5^2}$}
	\label{LRC_y2=x5+x_3_opt_2}
    \end{table}

\end{example}

\begin{remark}
    {\rm (1)} Note that the lengths of the optimal $[36,\,21,\,6]_{5^2}$, $[36,\,23,\,3]_{5^2}$, $[40,\,30,\,4]_{5^2}$ and $[45,\,34,\,4]_{5^2}$ locally repairable codes in Example \ref{r_even_alm_opt}(2) are longer than $q+2\sqrt{q}$, and even the length of the last one reaches $q+4\sqrt{q}$. 

    {\rm (2)} As we can see, the optimal $[36,\,21,\,6]_{5^2}$ and $[36,\,23,\,3]_{5^2}$ locally repairable codes can not be obtained from Theorem 1 in \cite{Optimal_Locally_Repairable_Codes_Via_Elliptic_Curves}. To some extent, the length of our optimal locally repairable code could be longer than the code constructed in \cite{Optimal_Locally_Repairable_Codes_Via_Elliptic_Curves}. 
\end{remark}

\section{Conclusion}\label{conclusion}

In this paper, we introduce a method for constructing locally repairable codes through the use of automorphism groups associated with genus $2$ hyperelliptic curves. This approach leads to the development of several new families of locally repairable codes that are either optimal or almost optimal. The maximum code length $n$ attainable through this construction can approach $q+4\sqrt{q}$, and the locality $r$ satisfies $r+1\mid |{\rm{Aut}}(\mathfrak{C}/\mathbb{F}_q)|$ (refer to Theorems \ref{r_odd} and \ref{r_even}). Moreover, we derive certain optimal locally repairable codes with lengths surpassing the limit of $q+2\sqrt{q}$ (as illustrated in Examples \ref{r_odd_optimal} and \ref{r_even_alm_opt}). Future research will need to focus on developing criteria to determine the conditions under which the codes we have constructed are indeed optimal.

\section*{Acknowledgement}

This work is supported by the National Natural Science Foundation of China (No. 12441107), Guangdong Major Project of Basic and Applied Basic Research
(No. 2019B030302008) and Guangdong Provincial Key Laboratory of Information Security Technology (No.2023B1212060026).

\bibliographystyle{ieeetr}
\bibliography{reference}

\end{document}